\newcommand{\gpa}{\mbox{ GPa}}
\let\counterwithin\relax  
\definecolor{dark-gray}{gray}{0.3}
\definecolor{dkgray}{rgb}{.4,.4,.4}
\definecolor{dkblue}{rgb}{0,0,.5}
\definecolor{medblue}{rgb}{0,0,.75}
\definecolor{rust}{rgb}{0.5,0.1,0.1}
\newtheoremstyle{myThm} 
    {\topsep}                    
    {\topsep}                    
    {\itshape}                   
    {}                           
    {\sffamily\bfseries}                   
    {.}                          
    {.5em}                       
    {}  
\newtheoremstyle{myRem} 
    {\topsep}                    
    {\topsep}                    
    {}                   
    {}                           
    {\sffamily}                   
    {.}                          
    {.5em}                       
    {}  
\newtheoremstyle{myDef} 
    {\topsep}                    
    {\topsep}                    
    {}                   
    {}                           
    {\sffamily\bfseries}                   
    {.}                          
    {.5em}                       
    {}  
\theoremstyle{myThm}
\newtheorem{theorem}{Theorem}[section]
\newtheorem{proposition}[theorem]{Proposition}
\theoremstyle{myRem}
\newtheorem{remark}[theorem]{Remark}
\theoremstyle{myDef}
 \newtheorem{example}[theorem]{Example}
\let\originalleft\left
\let\originalright\right
\renewcommand{\left}{\mathopen{}\mathclose\bgroup\originalleft}
\renewcommand{\right}{\aftergroup\egroup\originalright}
\renewcommand{\phi}{\varphi}
\providecommand{\mathbbm}{\mathbb} 
\newcommand{\R}{\mathbbm{R}}
\newcommand{\N}{\mathbbm{N}}
\renewcommand{\O}{\mathcal{O}}
\newcommand{\K}{\mathcal{K}}
\newcommand{\U}{\mathcal{U}}
\newcommand{\UU}{U'}
\renewcommand{\AA}{A'}
\newcommand{\YY}{Y'}
\newcommand{\Z}{\mathcal{Z}}
\newcommand{\A}{\mathcal{A}}
\newcommand{\Y}{\mathcal{Y}}
\newcommand{\G}{\mathcal{G}}
\newcommand{\F}{\mathcal{F}}
\definecolor{mygreen}{rgb}{0.1,0.75,0.2}
\newcommand{\nc}{\normalcolor}
\newcommand{\z}{z}
\newcommand{\md}{\mbox{d}}
\title{Data-Driven Forward Discretizations for Bayesian Inversion} 
\author[1]{D. Bigoni}
\author[2]{Y. Chen}
\author[3]{N. Garcia Trillos}
\author[1]{Y. Marzouk}
\author[2]{D. Sanz-Alonso}
\affil[1]{Massachusetts Institute of Technology}
\affil[2]{University of Chicago}
\affil[3]{University of Wisconsin, Madison}
\date{}
\makeatletter\@addtoreset{section}{part}\makeatother%
\numberwithin{equation}{section}
\newcommand{\upperRomannumeral}[1]{\uppercase\expandafter{\romannumeral#1}}
\begin{document}
\maketitle 


\begin{abstract}
This paper suggests a framework for the learning of discretizations of expensive forward models in  Bayesian inverse problems. The main idea is to incorporate the parameters governing the discretization as part of the unknown to be estimated within the Bayesian machinery. We numerically show that in a variety of inverse problems arising in mechanical engineering, signal processing and the geosciences, the observations contain useful information to guide the choice of discretization.
\end{abstract}

	\section{Introduction}\label{sec:introduction}
	Models used in science and engineering are often described by problem-specific input parameters that are estimated from indirect and noisy observations. The inverse problem of input reconstruction is defined in terms of a \emph{forward model} from inputs to observable quantities, which in many applications needs to be approximated by discretization. A broad class of examples motivating this paper is the reconstruction of input parameters of differential equations. The choice of forward model discretization is particularly important in Bayesian formulations of inverse problems: discretizations need to be cheap since statistical recovery may involve  millions of evaluations of the discretized forward model; they also need to be accurate enough to enable input reconstruction. The goal of this paper is to suggest a simple data-driven framework to build forward model discretizations to be used in Bayesian inverse problems. The resulting discretizations are data-driven in that they  finely resolve  regions of the input space where the data are most informative, while keeping the cost moderate by  coarsely resolving  regions that are not informed by the data. 
	
	To be concrete and explain the idea, let us consider the inverse problem of recovering an unknown $u$ from data $y$ related by
	\begin{equation}\label{eq:toydescription}
	y = \G(u) + \eta,
	\end{equation}
	where $\G$ denotes the forward model from inputs to observables, $\eta\sim N(0,\Gamma)$ represents model error and observation noise, and $\Gamma$ denotes a positive definite noise covariance matrix. We will follow a Bayesian approach, viewing $u$ as a random variable \cite{kaipio2006statistical,AS10,sanz2018inverse} with  \emph{prior} distribution $p_u(u).$  The Bayesian solution to the inverse problem is the \emph{posterior} distribution $p_{u|y}(u)$ of $u$ given the data $y,$ which by an informal application of Bayes theorem is characterized by
	\begin{align}\label{eq:posteriorintro}
	\begin{split}
		 p_{u|y}(u)  \propto \exp \bigl(-\Phi(u;y)\bigr)p_u(u), \quad \quad  \Phi(u;y):= \frac12 \left\| y - \G(u) \right\|_\Gamma^2
	\end{split}
	\end{align}
	with $\| \cdot \|_\Gamma := \| \Gamma^{-1/2} \cdot \|.$
	A common computational bottleneck arises when the forward model $\G$ and hence the likelihood are intractable, meaning that it is impossible or too costly to evaluate. This paper introduces a framework to tackle this computational challenge by employing data-driven discretizations of the forward model. The main idea is to include the parameters that govern the discretization as part of the unknown to be estimated within the Bayesian machinery. More precisely, we consider a family  $\{\G^a\}_{a\in \A}$ of approximate forward models and put a prior $q_{u,a}(u, a)$ over both unknown inputs $u$ and forward discretization parameters $a\in \A$  to define a joint posterior
	\begin{align}\label{eq:posteriorsurrogate}
	\begin{split}
	q_{u,a|y}(u,a) \propto \exp \bigl(-\Psi(u,a) \bigr) q_{u,a}(u, a), \quad \quad \Psi(u, a;y):= \frac12 \left\| y - \G^a(u) \right\|_\Gamma^2.
	\end{split}
	\end{align}
	While this structure underlies many hierarchical formulations of Bayesian inverse problems \cite{kaipio2006statistical}, in this paper the hyper-parameter $a$  determines the choice of discretization of the forward model $\G.$
	
	Including the learning of the numerical discretizations of the forward map as part of the inference agrees with the Bayesian philosophy of treating unknown quantities as random variables, and is also in the spirit of recent probabilistic numerical methods \cite{cockayne2019bayesian}; rather than implicitly assuming that a true hidden numerical discretization of the forward model generates the data, a Bayesian would acknowledge the uncertainty in the choice of a suitable discretization and let the observed data inform such a choice. Moreover, the Bayesian viewpoint has two main practical advantages. First, data-informed grids will typically be coarse in regions of the input space that are not informed by the data, allowing successful input reconstruction at a reduced computational cost. Second, the posterior $q_{u,a|y}(u,a)$ contains useful uncertainty quantification on the discretizations. This additional uncertainty information may be exploited to build a high-fidelity forward model to be employed within existing inverse problem solvers, either in Bayesian or classical settings. 
	
	\subsection{Related Work}\label{ssec:literature}
	The Bayesian formulation of inverse problems provides a flexible and principled way to combine data with prior knowledge. However, in practice it is rarely possible to perform posterior inference with the model of interest  \eqref{eq:posteriorintro} due to various computational challenges. In this paper we investigate the construction of computable data-driven forward discretizations of intractable likelihoods arising in the inversion of differential equations. Other intertwined obstacles for posterior inference are:
	\begin{itemize}
		\item {\bf Sampling cost.} While exact posterior inference is often intractable, approximate posterior inference can be performed by employing sampling algorithms. Markov chain Monte Carlo and particle-based methods are popular, but  implementations of these algorithms require repeated evaluation of the forward model $\G,$ which may be costly.
		\item {\bf Large input dimension.} The unknown parameter $u$ may be high, or even infinite dimensional. While the convergence rate of certain sampling schemes may be independent of the input dimension \cite{cotter2013mcmc, agapiou2015importance, trillos2017consistency}, the computational and memory cost per sample may increase prohibitively with dimension. 
		\item {\bf Model error.} The forward model is only an approximation of the real relationship between input and observable output variables. Model discrepancy can damage input recovery. 
		\item {\bf Complex geometry.} The unknown may be a function defined on a complex, perhaps unknown domain that needs to be approximated. 
	\end{itemize}
	
	All these challenges have long been identified \cite{sacks1989design,kennedy2001bayesian,kaipio2006statistical,kaipio2007statistical},  giving rise to a host of methods for sampling, parameter reduction, model reduction, enhanced model error techniques  and  geometric methods for inverse problems. We focus on the model-reduction problem of building forward discretizations, but the methodology proposed in this paper can be naturally combined with existing techniques that address complementary challenges.  For instance, our forward model discretizations may be used within multilevel MCMC methods \cite{giles2008multilevel} or within two-stage sampling methods \cite{green2001delayed,tierney1999some,christen2005markov,cui2015data,efendiev2006preconditioning}, and thus help to reduce the sampling cost. Also, forward model discretizations may be combined with parameter reduction and model adaptation techniques, as in \cite{lieberman2010parameter,li2018model}. It is important, however, to distinguish between the parameter and model reduction problems. While the former aims to find suitable small-dimensional representations of the input $u$, the latter is concerned with effectively reducing the number of degrees of freedom used to compute the forward model $\G.$  In regards to model error, our framework may be thought of as incorporating Bayesian model choice to the Bayesian solution of inverse problems by viewing each forward model discretization as a potential model. Following this interpretation, the \emph{a posteriori}  choice of forward discretization may in principle be determined using Bayes factors. Lastly, learning appropriate discretizations of forward models is particularly important for inverse problems set in complex, possibly uncertain geometries \cite{trillos2017consistency,garcia2018continuum,harlim2019kernel}.
	
	Many approaches to computing forward map surrogates and reduced-order models have been proposed; we refer to \cite{frangos2010surrogate} for an extended survey, and to \cite{peherstorfer2018survey} for a broader discussion of multi-fidelity models in other outer-loop applications. Most methods fall naturally into one of three categories: 
	\begin{enumerate}
		\item Projection-based methods: the forward model equations are described in a reduced basis that is constructed using few high-fidelity forward solves (called snapshots). Two popular ways to construct the reduced basis are proper orthogonal decomposition (POD) and reduced order basis. In the inverse problem context, data-informed construction of snapshots  \cite{cui2015data} allows to approximate the posterior support with fewer high-fidelity forward runs. To our knowledge, there is little theory to guide the required number or location of snapshots to meet a given error tolerance. 
		\item Spectral methods: polynomial chaos \cite{xiu2002wiener} is a popular method for forward propagation of uncertainty, that has more recently been used to produce surrogates for intractable likelihoods \cite{marzouk2007stochastic}. The paper \cite{marzouk2009stochastic} translates error in the likelihood approximation to Kullback-Leibler posterior error. A drawback of these methods is that they are only practical when the random inputs can be represented by a small number of random variables. 
		Recent interest lies in adapting the spectral approximations to observed data \cite{li2014adaptive}.
		\item Gaussian processes and neural networks: some of the earliest efforts to allow for Bayesian inference with complex models suggested to use Gaussian processes  \cite{rasmussen2006gaussian} to construct surrogate likelihood models \cite{sacks1989design,kennedy2001bayesian}. The accuracy of the resulting approximations has been studied in \cite{stuart2017posterior}, which again requires a suitable representation of the input space. Finally, representation of the likelihood using neural networks in combination with generalized polynomial chaos expansions
		 has been investigated in   \cite{schwab2019deep}.
	\end{enumerate}
	This paper focuses on grid-based discretizations and density-based discretizations of static inverse problems arising in mechanical engineering,  signal processing and the geophysical sciences. However, the proposed framework may be used in conjunction with other reduced-order models, in dynamic data assimilation problems, and in other applications. Finally, we mention that for classical formulations of certain specific inverse problems, optimal forward discretization choices have been proposed \cite{borcea2005continuum,becker2005mesh}. 
	
	\subsection{Outline and Contributions}
	Section \ref{sec:fullproblem} reviews the Bayesian formulation of inverse problems. Section \ref{sec:bayesiandiscretization} describes the main framework for the Bayesian learning of forward map discretizations. We will consider two ways to parametrize  discretizations: in the first, the grid points locations are learned directly, and in the second we learn a probability density from which to obtain the grid. In Section \ref{sec:sampling} we discuss a general approach to sampling the joint posterior over unknown input and discretization parameters, which consists of a Metropolis-within-Gibbs that alternates between a reversible jump Markov chain Monte Carlo (MCMC) algorithm to update the discretization parameters and a standard MCMC to update the unknown input. Section \ref{sec:numerics} demonstrates the applicability, benefits, and limitations of our approach in a variety of inverse problems arising in mechanical engineering, signal processing and source detection,  considering Euler discretization of ODEs, Euler-Maruyama discretization of SDEs, and finite element methods for PDEs. We conclude in Section \ref{sec:conclusions} with some open questions for further research.
	
	\section{Background: Bayesian Formulation of Inverse Problems}\label{sec:fullproblem}
	Consider the inverse problem of recovering an unknown $u \in \U$ from data $y\in\R^{m}$ related by
	\begin{equation}
	\label{eq:inverseproblem}
	y = \G(u) + \eta,
	\end{equation}
	where $\U$ is a space of admissible unknowns and $\eta$ is a random variable whose distribution is known to us, but not its realization. In many applications, the \emph{forward model} $\G:\U \to \R^{m}$  can be written as the composition of forward and observation maps, $\G = \O\circ \F$. The forward map $\F: \U \to \Z$ represents a \emph{complex} mathematical model that assigns outputs $\z \in \Z$ to inputs $u \in \U$. For instance,  $u$ may be the parameters of a differential equation, and $z$ may be its analytical solution. The observation map $\O: \Z \to \Y$ establishes a link between outputs and observable quantities, e.g. by point-wise evaluation of the solution. 
	
In the Bayesian formulation of the inverse problem \eqref{eq:inverseproblem}, one specifies a \emph{prior} distribution on $u$ and seeks to characterize the \emph{posterior} distribution of $u$ given $y.$ If the input space $\U$ is finite dimensional,  $\U\subset\R^{d}$, then the prior distribution, denoted as $p_u(u),$ can be defined through its Lebesgue density. The noise distribution of $\eta$ in $\R^{m}$ gives the \emph{likelihood} $p_{y|u}(y|u)$. In this work we assume, for concreteness, that $\eta$ is a zero-mean Gaussian with covariance $\Gamma \in \R^{m \times m}$, so that 
	\begin{equation}\label{eq:likelihood}
	p_{y|u}(y| u) \propto \exp \bigl(-\Phi(u;y)\bigr), \quad \quad  \Phi(u;y):= \frac12 \left\| y - \G(u) \right\|_\Gamma^2,
	\end{equation} 
	where $\| \cdot \|_\Gamma := \| \Gamma^{-1/2} \cdot \|$. Using Bayes' formula, the posterior density is given by
	\begin{equation}\label{eq:posteriorlebesgue}
	p_{u|y}(u) =\frac{1}{Z}\, p_{y|u}(y|u) p_u (u), \quad \quad Z=\int_\U p_{y|u}(y|u)p_u(u)\md u
	\end{equation}
	with multiplicative constant $Z$ depending on $y$.

For many inverse problems of interest, the unknown $u$ is a function and  the input space $\U$ is an infinite-dimensional Banach space. In such a case, the prior cannot be specified in terms of its Lebesgue density, but rather as a measure $\mu_u$ supported on $\U.$ Provided that $\G: \U \to \R^{m}$  is measurable and that $\mu_u(\U) = 1, $ the posterior measure $\mu_{u|y}$ is still defined,  in analogy to \eqref{eq:posteriorlebesgue}, as a change of measure with respect to the prior
	\begin{equation}\label{eq:posterior}
	\frac{\md \mu_{u|y}}{\md \mu_u}(u)\propto \exp\bigl(-\Phi(u;y)\bigr).
	\end{equation}
 We refer to  \cite{AS10} and \cite{trillos2016bayesian} for further details. The posterior $\mu_{u|y}$  contains, in a precise sense \cite{duke}, all the information on $u$ available in the data $y$ and the prior $\mu_u$. This paper is concerned with inverse problems where  $\G=\O\circ\F$ arises from a complex model $\F$ that cannot be evaluated pointwise; we then seek to approximate the \emph{idealized} posterior $\mu_{u|y}$ finding a compromise between accuracy and computational cost. 
 
 A simple but important observation is that approximating $\F$ accurately is \emph{not} necessary in order to approximate $\mu_{u|y}$ accurately. It is \emph{only} necessary to approximate $\G= \O \circ \F,$ since $\F$ appears in the posterior only through $\G.$ While producing discretizations to complex models $\F$ has been widely studied in numerical analysis, here we investigate how to approximate $\F$ with the specific goal of approximating the posterior $\mu_{u|y},$ incorporating prior and data knowledge into the discretizations. For some inverse problems the observation operator $\O$ also needs to be discretized, leading to similar considerations. 
	

	\section{Bayesian Discretization of the Forward Model}\label{sec:bayesiandiscretization}
	Suppose that $\F$ is the solution map to a differential equation that cannot be solved in closed form, and $\O$ is point-wise evaluation of the solution. Standard practice in computing the Bayesian solution to the inverse problem involves using an \emph{a priori}  fixed discretization, e.g., by discretizing the domain of the differential equation into a fine grid. Provided that the grid is fine enough, the posterior  defined with the discretized forward map can approximate well the one in  \eqref{eq:posterior}. However, the discretizations are usually performed on a  \emph{fine} uniform grid  which may lead to unnecessary waste of computational resources. Indeed, it is expected that the choice of discretization should be problem dependent, and should be informed both by the observation locations (which are often not uniform in space) and by the value of the unknown input parameter that we seek to reconstruct. Thus we seek to learn  \emph{jointly} the unknown input $u$ and the discretization of the forward map.

We will consider a parametric family of discretizations. Precisely, we let
	\begin{equation}\label{eq:surrogatespace}
	\A:= \Bigl\{ a = (k, \theta): \, k \in \K \subset\{1,2, \ldots\},\, \theta \in D(k) \subset \R^{d(k)}\Bigr\},
	\end{equation}
	and each pair $a=(k,\theta)\in \A$ will parameterize a discretized forward model $\G^a.$ 
	For given $k\in\K$, $d(k)$ represents the degrees of freedom in the discretization, and $\theta\in D(k)$ is the $d(k)$-dimensional model parameter of the discretization, where $D(k)$ is the region containing all parameters of interest. In analogy with Bayesian model selection frameworks \cite{robert2007bayesian,green1995reversible}, $k \in \K$ may be interpreted as indexing the discretization model.	We focus on the model reduction rather than the parameter reduction problem, and assume that all approximation maps share the same input and output spaces $\U$ and $\Z.$  We will illustrate the flexibility of this framework using grid-based approximations and density-based discretizations.
	
	\begin{example}[Grid-based discretizations]\label{ex:gridbased}
	\label{1}
	Here the first component of each element $a = (k,\theta) \in \A$ represents the number of points in a grid. The set $\K$ contains all allowed grid sizes. If we denote $D\subset \mathbb{R}^d$ as the temporal or spatial domain of the equation being discretized, we define $D(k) = D^k$ and $d(k) = d\times k$, where $D^k := D \times \cdots  \times D$ denotes the $k$-fold Cartesian product of $D$. Then the second component $\theta = [x_1, \ldots x_k]$ encodes the locations of $k$ grid points.
	\end{example}
	
	\begin{example}[Density-based discretizations]\label{ex:densitybased}
	 Here the first component of each element $a = (k,\theta ) \in \A$ represents again the number of points in a grid, and the second component parametrizes a probability density $\rho=\rho(x;\theta)$ on the temporal or spatial domain of interest, by a parameter $\theta$ of fixed dimension, independent of $k$. Given $a \in \A$ we may for instance employ MacQueen's method \cite{du2002} to formulate a centroidal Voronoi tessellation, which outputs $k$ generators $\{x_1,\ldots,x_k\}$, and then use them as grid points to generate a finite element grid by Delaunay triangulation. Intuitively $\theta$ controls the spatial  density  of the non-uniform grid points $\{x_1,\ldots,x_k\}$. The space $\K$  represents, as before, all the allowed number of grid points. 
	\end{example}
	
	\begin{example}[Other discretizations]
	As mentioned in the introduction, other discretizations and model reduction techniques could be considered within the above framework, including projection-based approximations, Gaussian processes, and graph-based methods. However, in our numerical experiments we will focus on grid-based and density-based discretizations.
	\end{example}

	We consider a product prior on $(u, a) \in \U \times \A,$ given by
	\begin{equation}\label{eq:prior}
	q_{u,a}(u,a) = q_u(u) q_a(a),
	\end{equation}
	where $q_u(u) = p_u(u)$ is as in the original, idealized inverse problem \eqref{eq:inverseproblem}.
	In general, conditioning on $u$ may or may not provide useful information about how to approximate $\G(u).$  When it does, this can be infused into the prior by letting the conditional distribution of $a$ given $u$  depend on $u.$ For simplicity we restrict ourselves to the product structure \eqref{eq:prior}.
	
	The examples above and the structure of the space $\A$ defined in equation \eqref{eq:surrogatespace} suggest to define hierarchically a prior over $a\in\A$
	\begin{equation}\label{eq:priorsurrogates}
	q_a(a) =q_{k,\theta}(k,\theta) = q_k(k) q_{\theta|k}(\theta|k),
	\end{equation}
	where $q_k(k)$ is a probability mass function that penalizes expensive discretizations that employ large number $d(k)$ of degrees of freedom, and $q_{\theta|k}(\theta|k)$ denotes the conditional distribution of $\theta$ given $k$ in $D(k).$

	We define the likelihood of observing data $y$ given $(u,a)$ by 
	\begin{equation}
	\label{eq:likelihoodApprox}
	q_{y|u,a}(y |u,a) \propto \exp \bigl(-\Psi(u,a;y) \bigr), \quad \quad \Psi(u, a;y):= \frac12 \left\| y - \G^a(u) \right\|_\Gamma^2,
	\end{equation}
	where $\G^a=\O\circ\F^a$. The discretized forward maps $\F^a$ will be chosen so that evaluating $\Psi$ is possible.
	
	We first consider the case where $\K=\{k\}$ is a singleton, and $\A:= \Bigl\{ a = (k, \theta): \theta \in D(k) \subset \R^{d(k)}\Bigr\}$ has a Euclidean space structure. Then, by Bayes' formula, 
	\begin{equation}\label{eq:posteriorApprox}
		q_{u,a|y}(u,a)=\frac{1}{\tilde{Z}}q_{y|u,a}(y|u,a)q_u(u)q_a(a),\quad\quad \tilde{Z}=\int_{\U\times\A}q_{y|u,a}(y|u,a)q_u(u)q_a(a)\md u \md a,
	\end{equation}
	where $\tilde{Z} = \tilde{Z}(y)$ is a normalizing constant.
	The first marginal of $q_{u,a|y}(u,a)$, which we denote as $q_{u|y}(u)$, constitutes a data-informed approximation of the posterior $p_{u|y}(u)$ from the full, idealized inverse problem \eqref{eq:posteriorlebesgue}. We have the following result:

	\begin{proposition}
		Let $p_{u|y}(u)$ be defined as in \eqref{eq:posteriorlebesgue} and $q_{u|y}(u)$ be defined as above. If $\G$ is bounded and, for $q_{u,a}$-almost any $(u,a)$, $\|\G^a(u)-\G(u)\|<\epsilon,$ \nc then 
		\begin{equation}
			d_{TV} \bigl(q_{u|y}(u), p_{u|y}(u)\bigr)<C\epsilon
		\end{equation}
		for some constant $C$ independent of $\epsilon$.
	\end{proposition}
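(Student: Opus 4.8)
\section*{Proof proposal}

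The plan is to observe that $p_{u|y}$ and $q_{u|y}$ are two posteriors built from the \emph{same} prior $q_u=p_u$, differing only through their likelihoods, and then to run the standard two-step posterior-stability argument: first a pointwise bound between the two effective likelihoods, then control of the ratio of normalizing constants. Marginalizing the joint posterior \eqref{eq:posteriorApprox} over $a$ and setting $g(u):=\exp\bigl(-\Phi(u;y)\bigr)$, the first marginal takes the form
\begin{equation*}
q_{u|y}(u) = \frac{1}{\tilde{Z}}\,\tilde{g}(u)\,p_u(u), \qquad \tilde{g}(u) := \int_\A \exp\bigl(-\Psi(u,a;y)\bigr)\,q_a(a)\,\md a,
\end{equation*}
with $\tilde{Z}=\int_\U \tilde{g}(u)\,p_u(u)\,\md u$, while $p_{u|y}(u)=\tfrac1Z\,g(u)\,p_u(u)$ with $Z=\int_\U g(u)\,p_u(u)\,\md u$; the Gaussian prefactor of the likelihood is identical in both models and cancels. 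It then suffices to compare $g$ with $\tilde{g}$ and $Z$ with $\tilde{Z}$.

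First I would bound the difference of the potentials. Using the identity $\|p\|_\Gamma^2-\|q\|_\Gamma^2=\la p-q,\,p+q\ra_\Gamma$ with $p=y-\G(u)$ and $q=y-\G^a(u)$, together with Cauchy--Schwarz,
\begin{equation*}
\bigl|\Phi(u;y)-\Psi(u,a;y)\bigr| \le \tfrac12\,\bigl\|\G^a(u)-\G(u)\bigr\|_\Gamma\,\bigl\|2y-\G(u)-\G^a(u)\bigr\|_\Gamma.
\end{equation*}
Here the hypotheses enter essentially: the assumed closeness gives $\|\G^a(u)-\G(u)\|_\Gamma\le\|\Gamma^{-1/2}\|\,\epsilon$ for $q_{u,a}$-a.e. $(u,a)$, while boundedness of $\G$ (hence of $\G^a$, up to $\epsilon$) bounds the cross factor $\|2y-\G(u)-\G^a(u)\|_\Gamma$ uniformly in $(u,a)$ by a constant depending only on $y$, $\Gamma$, and the bound on $\G$. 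Thus $|\Phi-\Psi|\le C_1\epsilon$ for a constant $C_1$ independent of $\epsilon$ (say for $\epsilon\le 1$), $q_{u,a}$-almost everywhere.

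Next I would transfer this to the effective likelihoods. The estimate $|\Phi-\Psi|\le C_1\epsilon$ gives $e^{-C_1\epsilon}\le \exp(-\Psi(u,a;y))/g(u)\le e^{C_1\epsilon}$ almost everywhere; since $q_a$ is a probability measure, integrating over $a$ preserves these two-sided bounds (by Fubini they hold for $q_u$-a.e. $u$) and yields $|\tilde{g}(u)-g(u)|\le (e^{C_1\epsilon}-1)\,g(u)\le C_2\epsilon\,g(u)$. Integrating against $p_u$ gives $|\tilde{Z}-Z|\le C_2\epsilon Z$, so in particular $\tilde{Z}\ge(1-C_2\epsilon)Z>0$ for small $\epsilon$ (note $Z>0$ since $g>0$). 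Writing
\begin{equation*}
\frac{\tilde{g}}{\tilde{Z}}-\frac{g}{Z} = \frac{\tilde{g}-g}{\tilde{Z}}-\frac{g\,(\tilde{Z}-Z)}{\tilde{Z}\,Z},
\end{equation*}
integrating against $p_u$ and using the three displayed bounds gives $\int|\tilde{g}/\tilde{Z}-g/Z|\,p_u\,\md u\le 2C_2\epsilon Z/\tilde{Z}$, whence $d_{TV}(q_{u|y},p_{u|y})=\tfrac12\int|\tilde{g}/\tilde{Z}-g/Z|\,p_u\,\md u\le C_2\epsilon Z/\tilde{Z}\le C\epsilon$, with $C$ independent of $\epsilon$.

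I do not expect a genuine obstacle; this is a routine posterior-perturbation estimate. The one point requiring care is the cross term $\|2y-\G(u)-\G^a(u)\|_\Gamma$: it is precisely to keep it uniformly bounded (and hence $C$ independent of $\epsilon$) that boundedness of $\G$ is needed, and one must check that the almost-everywhere closeness survives the marginalization over $a$, which it does because $q_a$ integrates to one. A secondary bookkeeping point is the restriction to small $\epsilon$ when bounding $1/\tilde{Z}$; for large $\epsilon$ the inequality is trivial since $d_{TV}\le 1$, so $C$ can always be enlarged to cover all $\epsilon$.
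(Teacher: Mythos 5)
Your proposal is correct and follows essentially the same route as the paper: marginalize the joint posterior over $a$, bound the difference of the effective likelihoods using the Lipschitz continuity of $w\mapsto e^{-w}$ together with the boundedness of $\G$ (which controls the cross term), deduce $|\tilde{Z}-Z|\le C\epsilon$, and conclude with the standard total-variation perturbation estimate. The only difference is that you carry out the final normalization-constant algebra explicitly (and handle the large-$\epsilon$ case via $d_{TV}\le 1$), whereas the paper delegates that step to a cited result (Theorem 1.14 of \cite{sanz2018inverse}).
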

	\begin{proof}
		Integrating both sides of the first equation in \eqref{eq:posteriorApprox} with respect to $a$:
		\[q_{u|y}(u)=\frac{1}{\tilde{Z}} \Bigl(\int_\A q_{y|u,a}(y|u,a)q_a(a) \md a\Bigr) q_u(u)=:\frac{1}{\tilde{Z}}\tilde{g}_y(u)q_u(u).\]
		Compare this to equation \eqref{eq:posteriorlebesgue} and write $g_y(u)=p_{y|u}(y|u)$, we have
		\begin{align*}
			\|\tilde{g}_y(u)-g_y(u)\|\le\int_\A \exp\left(-\frac{1}{2}\|y-\G^a(u)\|_\Gamma^2\right)-\exp\left(-\frac{1}{2}\|y-\G(u)\|_\Gamma^2\right)\md a\le C\|\G^a(u)-\G(u)\|,
		\end{align*}
		where the last inequality follows from the Lipschitz continuity of $e^{-w}$ for $w\ge 0$, boundedness of $\G$, and equivalence of norm in $\R^m$. This implies that $|\tilde{g}_y(u)-g_y(u)|\le C\epsilon$ and hence $|\tilde{Z}-Z|\le C\epsilon$. Then the statement follows from a slight modification of Theorem 1.14 in \cite{sanz2018inverse} and the definition of TV distance.
	\end{proof}

	Now we are ready to extend the above results to infinite-dimensional input space $\U$. We define a prior measure on $\U\times\A$ given by $\nu_{u,a}(du,da)=\nu_u(du)\times\nu_a(da)$, where $\nu_u(du) = \mu_u(du)$ is as in the idealized inverse problem. The posterior measure on $\U\times\A$ conditioning on $y$ will still be denoted by $\nu_{u,a|y}$.
	\begin{proposition}\label{prop:post}
	    Suppose that $\U$ is a separable Banach space with $\nu_u(\U)=1$, $\Psi: \U \times \A \to \R$ is continuous.
	    Then the posterior measure $\nu_{u,a|y}$ of $(u,a)$ given $y$ is absolutely continuous with respect to the prior $\nu_{u,a}$ on $\U\times\A$ and has Radon-Nikodym derivative
		\begin{equation}\label{eq:jointposterior}
		\frac{\md\nu_{u,a|y}}{\md\nu_{u,a}}(u,a) \propto \exp \bigl(-\Psi(u,a;y) \bigr).
		\end{equation} 
	\end{proposition}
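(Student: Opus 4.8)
The plan is to recognize this as the standard change-of-measure characterization of the posterior in function space, and to reduce it to two routine verifications: that the unnormalized density $\exp(-\Psi(u,a;y))$ is measurable, and that its integral against the prior,
\[
\tilde{Z} := \int_{\U \times \A} \exp\bigl(-\Psi(u,a;y)\bigr)\, \nu_{u,a}(\md u, \md a),
\]
is finite and strictly positive. Once $0 < \tilde{Z} < \infty$ is established, formula \eqref{eq:jointposterior} defines a bona fide probability measure absolutely continuous with respect to $\nu_{u,a}$, and its identification with the conditional law of $(u,a)$ given $y$ proceeds exactly as in the finite-dimensional Bayes rule \eqref{eq:posteriorApprox}. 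This mirrors Theorem~1.14 of \cite{sanz2018inverse} and the general theory in \cite{AS10}; note that $\nu_{u,a}$ is a probability measure because $\nu_u(\U)=1$ and $\nu_a$ is a probability measure, so their product is.

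First I would dispatch measurability. Since $\Psi$ is continuous on $\U \times \A$ and $w \mapsto \exp(-w)$ is continuous, the density $\exp(-\Psi)$ is continuous, hence Borel measurable; here the separability of $\U$ is what guarantees that this Borel $\sigma$-algebra is the one on which the product measure $\nu_{u,a}$ acts, so the integral defining $\tilde{Z}$ is meaningful. Next I would bound $\tilde{Z}$. Because $\Psi(u,a;y) = \tfrac12\|y - \G^a(u)\|_\Gamma^2 \ge 0$, we have $0 < \exp(-\Psi) \le 1$ pointwise, and integrating the upper bound against the probability measure $\nu_{u,a}$ gives immediately $\tilde{Z} \le 1 < \infty$.

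For the lower bound I would use that $\Psi$ is real-valued, so $\Psi < \infty$ everywhere and $\exp(-\Psi) > 0$ everywhere. Setting $A_n := \{(u,a) : \Psi(u,a;y) \le n\}$, the sets $A_n$ increase to all of $\U \times \A$, so $\nu_{u,a}(A_n) \to 1$ by continuity from below; picking any $n$ with $\nu_{u,a}(A_n) > 0$ then yields $\tilde{Z} \ge \econst^{-n}\,\nu_{u,a}(A_n) > 0$. I expect this strict positivity of $\tilde{Z}$ to be the only delicate point: it is exactly where the hypothesis that $\Psi$ takes finite real values (rather than $+\infty$) is indispensable, since a likelihood that vanished $\nu_{u,a}$-almost everywhere would leave the posterior undefined. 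Everything else is routine measure theory, which is why continuity and real-valuedness of $\Psi$ are precisely the hypotheses invoked.
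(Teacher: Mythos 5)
Your verification that $\exp\bigl(-\Psi(\cdot,\cdot;y)\bigr)$ is measurable and that its integral $\tilde Z$ against the prior satisfies $0<\tilde Z<\infty$ is correct, and it addresses a point the paper's own proof leaves implicit (positivity of $\md\nu_y/\md y$ is needed there in order to divide by it). But this only shows that the right-hand side of \eqref{eq:jointposterior} defines \emph{some} probability measure absolutely continuous with respect to $\nu_{u,a}$; the substance of the proposition is that this measure \emph{is} the conditional law of $(u,a)$ given $y$, and that is where your argument has a genuine gap. You defer that step with ``proceeds exactly as in the finite-dimensional Bayes rule,'' but that rule rests on the prior having a Lebesgue density, which is unavailable when $\U$ is an infinite-dimensional Banach space --- indeed the whole reason the statement is phrased as a Radon--Nikodym derivative with respect to $\nu_{u,a}$, rather than as a quotient of densities, is that no such quotient exists. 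The paper's proof spends essentially all of its effort on exactly this identification: it applies the disintegration theorem for Radon measures on separable metric spaces to write the joint law $\nu_{u,a,y}$ of $(u,a,y)$ both as $\int \nu_{u,a|y}\,\md\nu_y$ and as $\int \nu_{y|u,a}\,\md\nu_{u,a}$, computes the Lebesgue density of $\nu_y$ from the Gaussian noise model using Tonelli's theorem, and then reads off $\nu_{u,a|y}$ by matching the two expressions on arbitrary measurable rectangles $\UU\times\AA\times\YY$. This is also where the separability hypothesis actually enters --- it guarantees the existence of the regular conditional probability $\nu_{u,a|y}$ --- and not merely, as you suggest, the compatibility of Borel $\sigma$-algebras. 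To close the gap you would need either to reproduce such a disintegration argument or to invoke an abstract Bayes' theorem for conditional measures (as in \cite{AS10}) whose hypotheses you verify; as written, the central conclusion is asserted rather than proved.
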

	\begin{proof}
By the disintegration theorem (which holds for arbitrary Radon measures on separable metric spaces --see \cite{dellacherie2011probabilities} chapter 3, page 70) for all measurable subsets $\UU \subseteq \U$, $\AA\subseteq \A$ and $\YY \subseteq \Y$, we can write $\nu_{u,a,y}(\UU \times \AA \times \YY)$ in two different ways:
\[ \int_{\YY}\nu_{u,a|y}(\UU \times \AA|y) \md \nu_y(y)
=\nu_{u,a,y}(\UU, \AA, \YY)=\int_{\UU \times \AA} \nu_{y|u,a} (\YY | u,a) \md \nu_{u,a}(u,a).
   \]
In particular, 
\[ \nu_y( \YY ) = \int_{\U \times \A} \nu_{y|u,a}(\YY | u,a) \md \nu_{u,a}(u,a) = \int_{\U \times \A} \int_{\YY} Z_\Gamma^{-1}\exp\left(- \frac{1}{2} \lVert y-\G^a(u) \rVert_{\Gamma}^2 \right) \md y \md \nu_{u,a}(u,a), \]
given our assumptions on the noise model, where $Z_\Gamma$ is a constant depending on the noise covariance $\Gamma$. We can then use Tonelli's theorem to swap the order of the integrals and obtain
\[\nu_y(\YY) = \int_{\YY} \left( \int_{\U \times \A} Z_\Gamma^{-1}\exp\left(- \frac{1}{2} \lVert y-\G^a(u) \rVert_{\Gamma}^2 \right)d\nu_{u,a}(u,a) \right) \md y.\]
Given that $\YY$ is arbitrary, we conclude that $\nu_y$ is absolutely continuous with respect to the Lebesgue measure with density: 
\[ \frac{\md \nu_y(y)}{\md y}= \int_{\U \times \A} Z_\Gamma^{-1}\exp\left(- \frac{1}{2} \lVert y-\G^a(u) \rVert_{\Gamma}^2 \right)\md \nu_{u,a}(u,a). \]
On the other hand,
\begin{align*}
\int_{\UU \times \AA} \nu_{y|u,a} (\YY | u,a) \md \nu_{u,a}(u,a) & = \int_{\UU \times \AA} \int_{\YY} Z_\Gamma^{-1} \exp\left(- \frac{1}{2}\lVert y-\G^a(u)\rVert_\Gamma^2 \right) \md y \md \nu_{u,a}(u,a)
\\ &=\int_{\UU \times \AA} \int_{\YY} Z_\Gamma^{-1} \exp\left(- \frac{1}{2}\lVert y-\G^a(u)\rVert_\Gamma^2 \right)\left(\frac{\md \nu_y(y)}{\md y}\right)^{-1} \md \nu_y(y) \md \nu_{u,a}(u,a)
\\ &= \int_{\YY} \left(\int_{\UU \times \AA} Z_\Gamma^{-1} \exp\left(- \frac{1}{2}\lVert y-\G^a(u)\rVert_\Gamma^2 \right)\left(\frac{\md \nu_y(y)}{\md y}\right)^{-1} \md \nu_{u,a}(u,a)\right) \md \nu_y(y),
\end{align*}
applying Tonelli's theorem once again to obtain the last equality. Since $\YY$ was arbitrary, it follows that for $\nu_y$-a.e. $y$ we have
\[ \nu_{u,a|y}(\UU \times \AA|y) = \int_{\UU \times \AA} Z_\Gamma^{-1} \exp\left(- \frac{1}{2}\lVert y-\G^a(u)\rVert_\Gamma^2 \right)\left(\frac{\md \nu_y(y)}{\md y}\right)^{-1} \md \nu_{u,a}(u,a).\]
In turn, from the arbitrariness of $\UU, \AA$ it follows that, for $\nu_y$-a.e. $y$ the measure $\nu_{u,a|y}(\cdot| y)$ is absolutely continuous with respect to $\nu_{u,a}$ and its Radon-Nykodym derivative satisfies
\[ \frac{\md\nu_{u,a|y}}{\md\nu_{u,a}}(u,a) \propto \exp\bigl(- \frac{1}{2} \lVert y-\G^a(u)\rVert_{\Gamma}^2 \bigr) \]
as claimed, where the constant of proportionality depends on $y$.

	\end{proof}
	
As in the finite dimensional case, we have the following result:
	
	\begin{proposition}[Well-posedness of Posterior]
	Under the same assumption as in Proposition \ref{prop:post}, suppose further that for $q_{u,a}$-almost any $(u,a)$, $\|\G^a(u)-\G(u)\|<\epsilon$, and $\G$ is bounded. Then we have
	\begin{equation}
	    d_{TV}(\nu_{u|y}, \mu_{u|y})<C\epsilon
	\end{equation}
	for some constant $C$ independent of $\epsilon.$
	\end{proposition}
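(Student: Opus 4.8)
The plan is to mirror the finite-dimensional Proposition proved above, now carrying out the marginalization at the level of measures. First I would integrate out the discretization parameter $a$: since the prior factorizes as $\nu_{u,a}=\nu_u\times\nu_a$ with $\nu_u=\mu_u$, Proposition \ref{prop:post} together with Tonelli's theorem (the integrand being nonnegative) shows that the first marginal $\nu_{u|y}$ is absolutely continuous with respect to $\mu_u$, with
\[
\frac{\md\nu_{u|y}}{\md\mu_u}(u)=\frac{1}{\tilde Z}\,\tilde g_y(u),\qquad \tilde g_y(u):=\int_\A \exp\bigl(-\Psi(u,a;y)\bigr)\,\md\nu_a(a),\qquad \tilde Z=\int_\U \tilde g_y(u)\,\md\mu_u(u).
\]
The idealized posterior $\mu_{u|y}$ of \eqref{eq:posterior} is likewise absolutely continuous with respect to $\mu_u$ with density $g_y(u)/Z$, where $g_y(u)=\exp(-\Phi(u;y))$ and $Z=\int_\U g_y\,\md\mu_u$. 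Thus both measures are dominated by the common reference measure $\mu_u$, and $d_{TV}(\nu_{u|y},\mu_{u|y})$ can be written as a single integral against $\mu_u$ of the difference of these two densities.

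The second step is the pointwise likelihood estimate, which is identical to the one in the finite-dimensional proof. The hypothesis that $\|\G^a(u)-\G(u)\|<\epsilon$ for $q_{u,a}$-almost every $(u,a)$ (with $q_{u,a}=\nu_{u,a}=\mu_u\times\nu_a$ the product prior) means, by Fubini, that for $\mu_u$-a.e.\ $u$ the bound holds for $\nu_a$-a.e.\ $a$. For such $(u,a)$, using that $w\mapsto e^{-w}$ is $1$-Lipschitz on $w\ge 0$, the factorization $\tfrac12\bigl|\,\|y-\G^a(u)\|_\Gamma^2-\|y-\G(u)\|_\Gamma^2\bigr|=\tfrac12\bigl|\,\|y-\G^a(u)\|_\Gamma-\|y-\G(u)\|_\Gamma\bigr|\bigl(\|y-\G^a(u)\|_\Gamma+\|y-\G(u)\|_\Gamma\bigr)$, the reverse triangle inequality, the boundedness of $\G$ (hence of $\G^a$, which is within $\epsilon$ of $\G$), and the equivalence of norms on $\R^m$, I obtain $\bigl|\exp(-\Psi(u,a;y))-\exp(-\Phi(u;y))\bigr|\le C\epsilon$. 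Integrating over $a$ against the probability measure $\nu_a$ gives $|\tilde g_y(u)-g_y(u)|\le C\epsilon$ for $\mu_u$-a.e.\ $u$, and integrating over $u$ against $\mu_u$ gives $|\tilde Z-Z|\le C\epsilon$.

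To conclude, I would record that the boundedness of $\G$ forces $\Phi(u;y)\le M$ for some $M=M(y)$, so that $g_y,\tilde g_y\ge e^{-M}>0$ and hence $Z,\tilde Z\ge e^{-M}>0$ are bounded below away from zero uniformly in $\epsilon$. Splitting
\[
\Bigl|\tfrac{\tilde g_y(u)}{\tilde Z}-\tfrac{g_y(u)}{Z}\Bigr|\le \tfrac{1}{\tilde Z}\bigl|\tilde g_y(u)-g_y(u)\bigr|+g_y(u)\,\Bigl|\tfrac{1}{\tilde Z}-\tfrac{1}{Z}\Bigr|,
\]
integrating against $\mu_u$, and using $|\tilde Z-Z|\le C\epsilon$ together with the lower bounds on the normalizers, both terms are $O(\epsilon)$, which yields $d_{TV}(\nu_{u|y},\mu_{u|y})<C\epsilon$. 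Equivalently, since $\nu_{u|y}$ and $\mu_{u|y}$ are the posteriors obtained from the common prior $\mu_u$ and the likelihoods $\tilde g_y$ and $g_y$, one may invoke the slight modification of Theorem 1.14 in \cite{sanz2018inverse} used in the finite-dimensional case.

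I expect the only genuine obstacle beyond the finite-dimensional argument to be measure-theoretic bookkeeping: justifying the marginalized density via Tonelli under the product prior, and correctly passing the $q_{u,a}$-a.e.\ hypothesis through Fubini so that the pointwise bound survives the integral over $a$. The analytic estimate and the control of the normalizing constants are verbatim the finite-dimensional computation, so no new ideas are needed there.
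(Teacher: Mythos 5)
Your argument is correct and is precisely the route the paper intends: the paper states this proposition without proof, deferring to the finite-dimensional case via ``As in the finite dimensional case,'' and your proposal carries out exactly that argument --- marginalizing the Radon--Nikodym derivative from Proposition \ref{prop:post} over $a$, applying the same Lipschitz/boundedness estimate to get $|\tilde g_y - g_y|\le C\epsilon$ and $|\tilde Z - Z|\le C\epsilon$, and concluding with the perturbation bound of Theorem 1.14 of \cite{sanz2018inverse}. The added measure-theoretic care (Tonelli for the marginal density, Fubini to propagate the $q_{u,a}$-a.e.\ hypothesis, and the explicit lower bound on the normalizing constants) fills in details the paper leaves implicit.
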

	\begin{remark}
	In the context of grid-based forward approximations, the condition `$\|\G^a(u)-\G(u)\|<\epsilon$ $q_{u,a}$-almost surely' can be interpreted as `almost any draw from the approximation parameter space $\A$ can produce an approximation of the forward model with error at most $\epsilon$'. This is often the case, for example, when the grids are finer than some threshold under regularity conditions on the input space.
	\end{remark}

	\section{Sampling the Posterior}\label{sec:sampling}
	The structure of the joint posterior $\nu_{u,a|y}$ over unknowns $u\in \U$ and approximations $a\in \A$ suggests using a Metropolis-within-Gibbs sampler, which constructs a Markov chain $(u^{(n)}, a^{(n)})$ by alternatingly  sampling each coordinate:
	
	\FloatBarrier
	\begin{algorithm}
		\caption{Metropolis-within-Gibbs Core Structure}
		\label{algorithm:gibbs}
		\begin{algorithmic}
			\State Choose $(u^{(1)}, a^{(1)}) \in \U \times \A$.
			\For{$n = 1:N$ } 
			\State  1. Sample $u^{(n+1)} \sim \mathbb{K}^{a^{(n)},y} (u^{(n)}| \cdotp).$
			\State 2. Sample  $a^{(n+1)} \sim \mathbb{L}^{u^{(n+1)},y} (a^{(n)}| \cdotp).$
			\EndFor
		\end{algorithmic}
	\end{algorithm}
	\FloatBarrier
	In the above, $\mathbb{K}^{a,y}$ and $\mathbb{L}^{u,y}$ are Metropolis-Hastings Markov kernels that are reversible with respect to $u| (a, y)$ and $a|(u, y).$ We remark that the kernel $\mathbb{K}^{a,y}$ involves evaluation of the forward model approximation $\G^a$ but not of the intractable full model $\G.$ 
	While the choice and design of the kernels $\mathbb{K}^{a,y}$ and $\mathbb{L}^{u,y}$ will clearly be problem-specific, and here we consider a standard method appropriate for the case where the input space $\U$ is a space of functions  to define $\mathbb{K}^{a,y}.$
	
	Before describing how to sample the full conditionals $\nu_{u|a,y}$ and $\nu_{a|u,y}$ of $u| (a, y)$ and $a |(u, y)$  it is useful to note that they satisfy the following expressions:
		\begin{equation}
		\frac{\md \nu_{u|a,y}}{\md \nu_u}(u) \propto \exp \Bigl(-\Psi(u,a;y) \Bigr), \quad  \frac{\md \nu_{a|u,y}}{\md \nu_a}(a) \propto  \exp \Bigl(-\Psi(u,a;y) \Bigr).
		\end{equation}

	\subsection{Sampling the Full Conditional  $u|y,a$}
	For  given $a$ and $y$, we can sample from  $\nu_{u|a,y}$ using pCN \cite{beskos2008mcmc}, with proposal
	\[ \tilde{u} : = \sqrt{1- \beta^2} u + \beta \xi, \quad \quad \xi \sim \mu_u ,\]
	and acceptance probability
	\[ \alpha(u , \tilde{u}) := \min \Bigl\{1, \exp\bigl(- \Psi(\tilde{u},a; y ) +  \Psi(u,a; y)  \bigr) \Bigr\}. \]
	 Other discretization-invariant MCMC samplers \cite{cui2016dimension,rudolf2015generalization} could also be used to update $u \vert y, a$, but pCN is a straightforward and effective choice in the examples considered here.

\subsection{Sampling the Full Conditional  $a |y,u$}
\subsubsection{Sampling Grid-based Discretizations}
We will use a Markov kernel $\mathbb{L}^{u,y}(a|\cdot)$ written as a mixture of two kernels, i.e. 
\[\mathbb{L}^{u,y}(a|\cdot)= \zeta \mathbb{L}_1^{u,y}(a|\cdot) + (1-\zeta)\mathbb{L}_2^{u,y}(a|\cdot),\]
each of which is induced by a different Metropolis-Hastings algorithm, and $\zeta$ determines the mixture weight. The proposal mechanism for each of the kernels corresponds to a different type of movement, described next:

\begin{enumerate}
\item For $\mathbb{L}^{u,y}_1(a|\cdot) $ we use Metropolis-Hastings to sample from the distribution $\nu_{a|u,y}$ using the following proposal: given $a=(k,\theta)$ with $\theta= [\theta_1, \dots, \theta_k]$ we set $\tilde{k}=k$ (i.e. the number of grid points stays the same) and let $\tilde{\theta}$ be defined by
\[ \tilde{\theta}_i= \theta_i , \quad i=1, \dots, k-1,\]
and sample $\tilde{\theta}_k$ from a distribution on $D$ with density (w.r.t. Lebesgue measure on $D$) $\tau_\theta$. In principle the density used to sample $\tilde{\theta}_k$ may depend on $\theta$.

\item For $\mathbb{L}^{u,y}_2(a|\cdot)$ we use Metropolis-Hastings to sample from the distribution $\nu_{a|u,y}$ using the following proposal: given $a=(k,\theta)$ we sample $\tilde{k} \sim \sigma(k|\cdot)$ where $\sigma(k|\cdot)$ is a Markov kernel on $\N$, and then generate $\tilde{\theta}$ according to
\begin{itemize}
\item If $\tilde{k} > k$ let $\tilde{\theta}_i=\theta_i$ for all $i=1, \dots,k$ and then sample $\tilde{\theta}_{k+1} , \dots, \tilde{\theta}_{\tilde{k} }$ independently from the density $\tau_\theta$.  
\item If $\tilde{k}  \leq k$ let $\tilde{\theta}_i=\theta_i$ for all $i=1, \dots, \tilde{k} $.
\end{itemize}
\end{enumerate}

\begin{remark}
We notice that the proposals described above are particular cases of the ones used in reversible jump Markov chain Monte Carlo \cite{green1995reversible}.
\end{remark}


For the Metropolis-Hastings algorithm associated to $\mathbb{L}^{u,y}_1(a|\cdot)$ the acceptance probability takes the form
\[ \alpha_1(a, \tilde{a})= \min \biggl\{ 1, \exp\Bigl(- \Psi(u,\tilde{a};y) + \Psi(u,a;y)\Bigr)
\frac{ \tau_{\tilde{\theta}}(\theta_k) }{ \tau_{\theta}( \tilde{\theta}_k)} \biggr\}, \]
where recall $a=(k,\theta)$ and $\theta=(\theta_1, \dots, \theta_k)$ and $\tilde{a}$ is defined similarly.

For the Metropolis-Hastings algorithm associated to $\mathbb{L}^{u,y}_2(a|\cdot)$ the acceptance probability takes the form
\[ \alpha_2(a, \tilde{a})= \min \biggl\{ 1, \frac{\sigma(k| \tilde{k} ) \nu_{k}(\tilde{k} ) }{\sigma(\tilde{k} |k ) \nu_{k}(k)}\exp\Bigl(- \Psi(u,\tilde{a};y) + \Psi(u,a;y)\Bigr) H(k,\theta,\tilde{k}, \tilde{\theta}) \biggr\}, \]
where
\[ H(k, \theta, \tilde{k}, \tilde{\theta} ):= \begin{cases}  
 \prod _{i=1 }^{k - \tilde{k} } \tau_{\tilde{\theta}}( \theta_{\tilde{k} +i} )
 & \text{ if } k >\tilde{k}, \\
 \left( \prod _{i=1 }^{ \tilde{k}  - k} \tau_{\theta}( \tilde{\theta}_{k+i} ) \right)^{-1} & \text{ if } \tilde{k} \geq k. \end{cases}\]

We notice that since each of the kernels $\mathbb{L}_1^{u,y}(a|\cdot)$ and $\mathbb{L}_2^{u,y}(a|\cdot)$ is defined by a Metropolis-Hastings algorithm, they leave  the target $\nu_{a|u,y}$ invariant, and hence so does the kernel $\mathbb{L}^{u,y}(a|\cdot)$.

 \begin{remark}
  If in the above the distribution $\tau_{\theta}$ is, regardless of $\theta$, the uniform distribution on the domain $D$, then the acceptance probabilities reduce, respectively, to
\[ \alpha_1(a,\tilde{a})= \min\Bigr\{1, \exp\Bigl(- \Psi(u, \tilde{a};y )+ \Psi(u, a;y) \Bigr)\Bigr\}, \]
 and
  \[ \alpha_2(a, \tilde{a} )= \min \left \{1, \frac{\sigma(k| \tilde{k}) \nu_{k}(\tilde{k}) }{\sigma(\tilde{k}|k ) \nu_{k}(k)} \exp\Bigl(- \Psi(u,\tilde{a};y) + \Psi(u,a;y) \Bigr) \right \}. \]
 \end{remark}

	\subsubsection{Sampling Density-based Discretizations}
Since in this case the dimension of $\theta$ is fixed, the calculation of the acceptance probabilities is straightforward and the details are omitted. We refer to Subsection \ref{ssec:FEM} for a numerical example.

\section{Numerical Examples}\label{sec:numerics}
In this section we demonstrate the applicability of our framework and sampling approach in a variety of inverse problems. Our aim is illustrating the benefits and potential limitations of the methods; for this reason we consider inverse problems for which we have intuitive understanding of where the discretizations should concentrate, thus validating the performance of the proposed approach. Before discussing the numerical results, we summarize the main goals and outcomes of each set of experiments:
\begin{itemize}
\item In Subsection \ref{section:euler ode 1} we consider  an inverse problem in mechanics  \cite{bigoni2019greedy}, for which some observation settings highly influence the best choice of discretization while others inform it mildly. Our numerical results show that the gain afforded by grid learning is most clear whenever the observation locations highly influence the choice of discretization. 
We employ grid-based discretizations as described in Example \ref{ex:gridbased} with an Euler discretization of the forward map. We also illustrate the applicability of the method in both  finite and infinite-dimensional representations  of the unknown parameter, showing a more dramatic effect in the latter.
\item In Subsection \ref{sec:SDE} we consider an inverse problem in signal processing \cite{hairer2011signal}, with a choice of observation locations that determine where the discretization should concentrate. Our numerical results show that the grids adapt to the expected region, and that the degrees of freedom in the discretization necessary to reconstruct the unknown is below that necessary to satisfy stability of the numerical method with uniform grids.
We employ grid-based discretizations as described in Example \ref{ex:gridbased} with an Euler-Maruyama discretization of the forward map. 
\item In Subsection \ref{ssec:FEM} we consider an inverse problem in source detection, where the true hidden unknown determines how best to discretize the forward model. Our numerical results show that the grids adapt as expected.
We employ density-based discretizations as described in Example \ref{ex:densitybased} with a finite element discretization of the forward model.
\end{itemize}
	
	\subsection{Euler Discretization of ODEs: Estimation of the Young’s Modulus of a Cantilever Beam}\label{section:euler ode 1}
	We consider an inhomogeneous cantilever beam clamped on one side ($x=0$) and free on the other ($x=L$). Define $D=[0, L].$ Let  $u(x)$ denote its Young's modulus and let $M(x)$ be a load applied onto the beam. 
	Timoshenko's beam theory gives the displacement $z(x)$ of the beam and the angle of rotation $\varphi(x)$ through the coupled ordinary differential equations
	\begin{equation}
	\label{eq:cantilever}
	\begin{cases}
	\frac{d}{dx} \Bigl[\frac{u(x)}{2(1+r)} \Bigl(\varphi(x)-\frac{d}{dx}z(x) \Bigr) \Bigr]=\frac{M(x)}{\kappa A}, \\
	\frac{d}{dx} \Bigl(u(x)I\frac{d}{dx}\varphi(x)\Bigr)=\kappa A \frac{u(x)}{2(1+r)} \Bigr(\varphi(x)-\frac{d}{dx}z(x)\Bigr),
	\end{cases}
	\end{equation}
	where $r$, $\kappa$, $A$, $I$ are physical constants. Following \cite{bigoni2019greedy}, we consider the inverse problem  of estimating the Young's modulus $u(x)$ from sparse observations of the displacement $z(x)$, where both $u$ and $z$ are functions from $D$ to $\mathbb{R}.$
	
	Let $\F:u\mapsto z$ be the solution map to equations \eqref{eq:cantilever}. Let $\{s_i\}_{i=1}^m \subset D$ be the locations of the observation sensors, leading to the  observation operator $\O: z \mapsto y \in \mathbb{R}^m $ defined coordinate-wise by 
	$$O_i(z):=\int_0^{L}z\phi_i\mbox{d}x, \quad \quad \phi_i(x):= \frac{1}{\gamma_i} \exp\Bigl(-(s_i-x)^2/(2\delta^2)\Bigr), \quad \quad 1 \le i \le m ,$$
	 where  $\delta=10^{-4}$ and $\gamma_i$ is the normalizing constant such that $\int_0^L \phi_i\mbox{d}x=1$. Data are generated according to the model 
	 $$y= \mathcal{O}\circ\F(u)+\eta=:\mathcal{G}(u)+\eta,$$
	 where $\eta$ denotes the observation error, which is assumed to follow a Gaussian distribution $N(0, \gamma_{obs}^2I)$. Notice that for system \eqref{eq:cantilever} with proper boundary conditions specified at $x=0$, the displacement  $z(x^\star)$ at any point $0 <x^\star<L$  depends only on the values $\{u(x):x<x^\star\}.$ Thus, we expect suitable discretizations of the forward model to  refine finely only the region $\{0 < x  < s_m\},$ where $s_m$ is the right-most observation location. We will discuss this in detail in section \ref{sec:beamimplem}.

	\subsubsection{Forward Discretization}

	\begin{figure}
		\centering
		\begin{subfigure}{.33\textwidth}
			\centering
			\includegraphics[height = 3.5cm,width=1\linewidth]{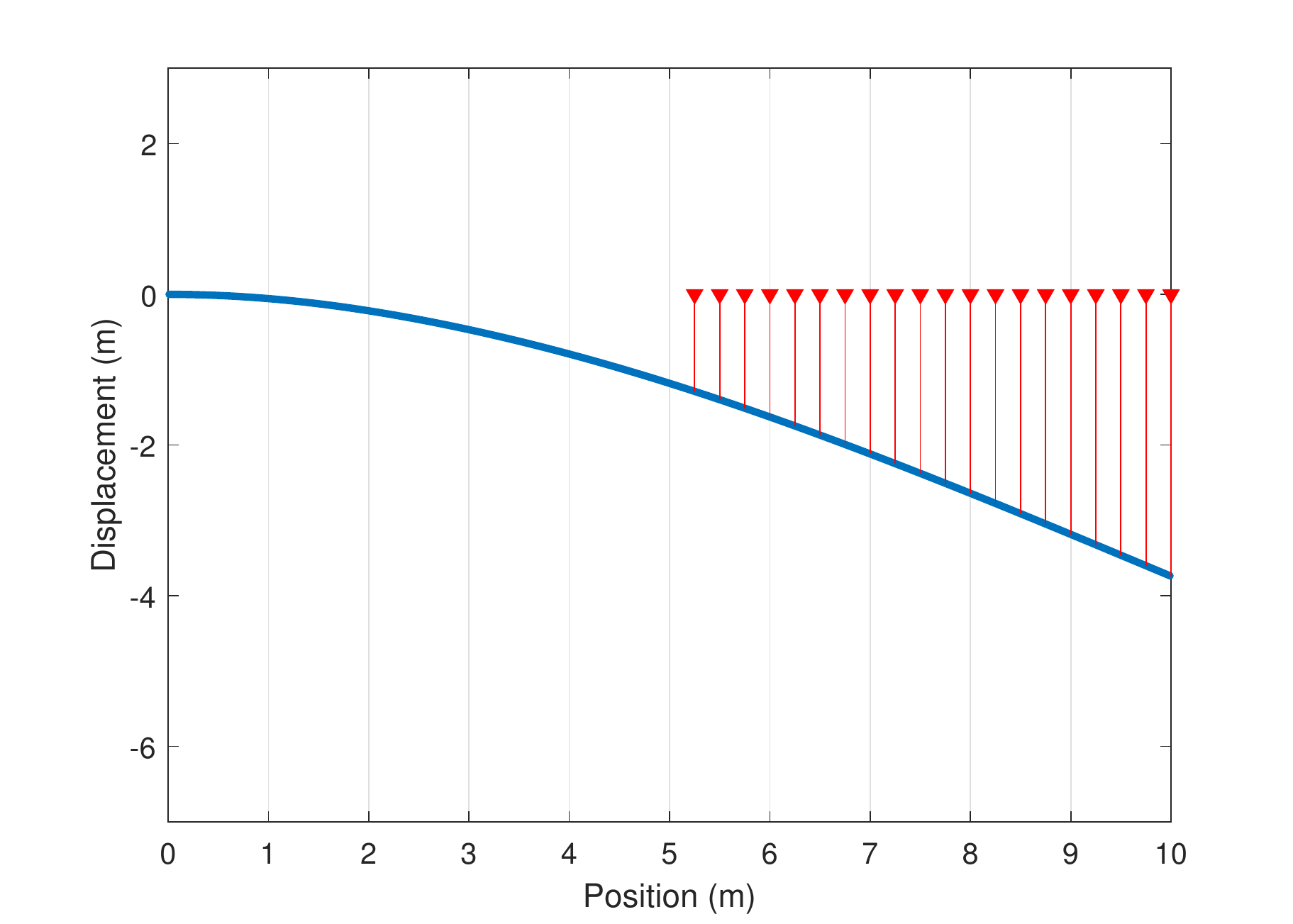}
			\caption[short]{Observation locations.}
			\label{fig:rdis1}
		\end{subfigure}%
		\hfill
		\begin{subfigure}{.33\textwidth}
			\centering
			\includegraphics[height = 3.5cm,width=1\linewidth]{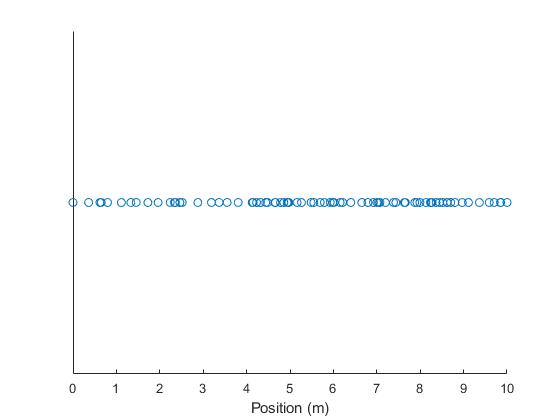}
			\caption[short]{A sampled grid.}
			\label{fig:rdis2}
		\end{subfigure}
		\hfill
		\begin{subfigure}{.33\textwidth}
			\centering
			\includegraphics[height = 3.5cm,width=1\linewidth]{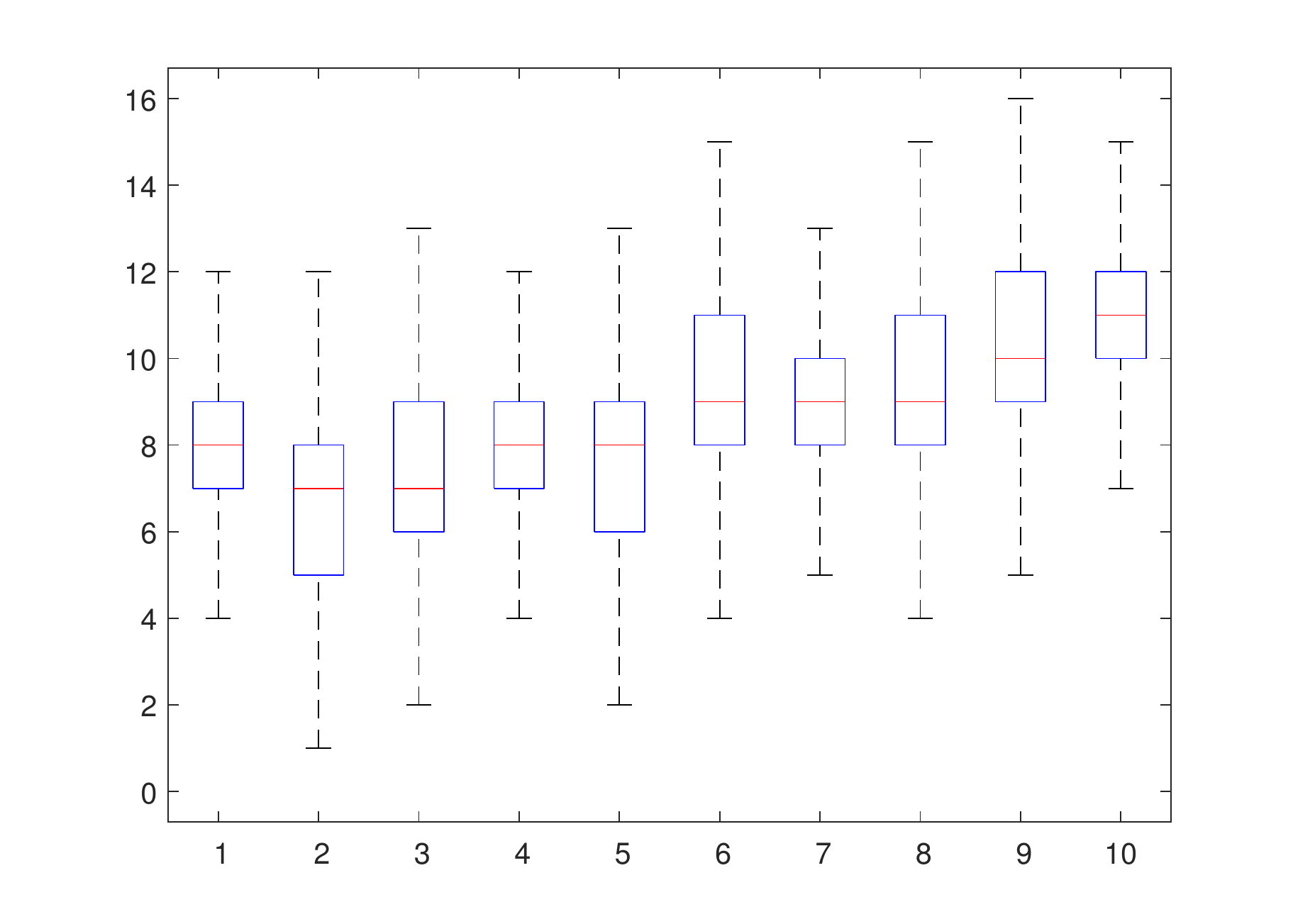}
			\caption[short]{Number of grid points.}
			\label{fig:rdis3}
		\end{subfigure}
		\vskip\baselineskip
		\begin{subfigure}{.45\textwidth}
			\centering
			\includegraphics[height = 4.5cm,width=8.25cm]{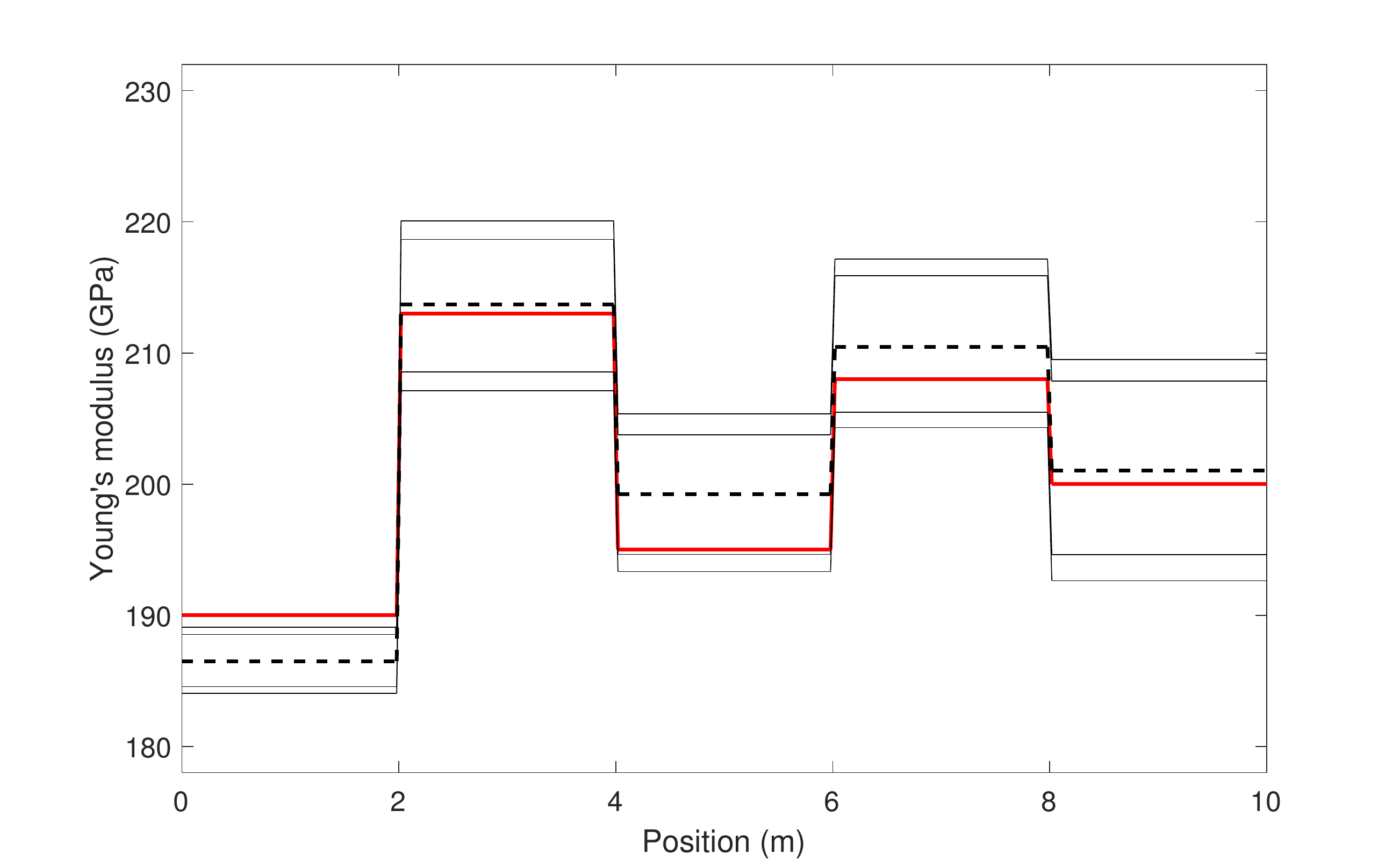}
			\caption{True vs. posterior with grid learning.}
			\label{fig:rdis4}
		\end{subfigure}
		\hfill
		\begin{subfigure}{.45\textwidth}
			\centering
			\includegraphics[height = 4.5cm,width=8.25cm]{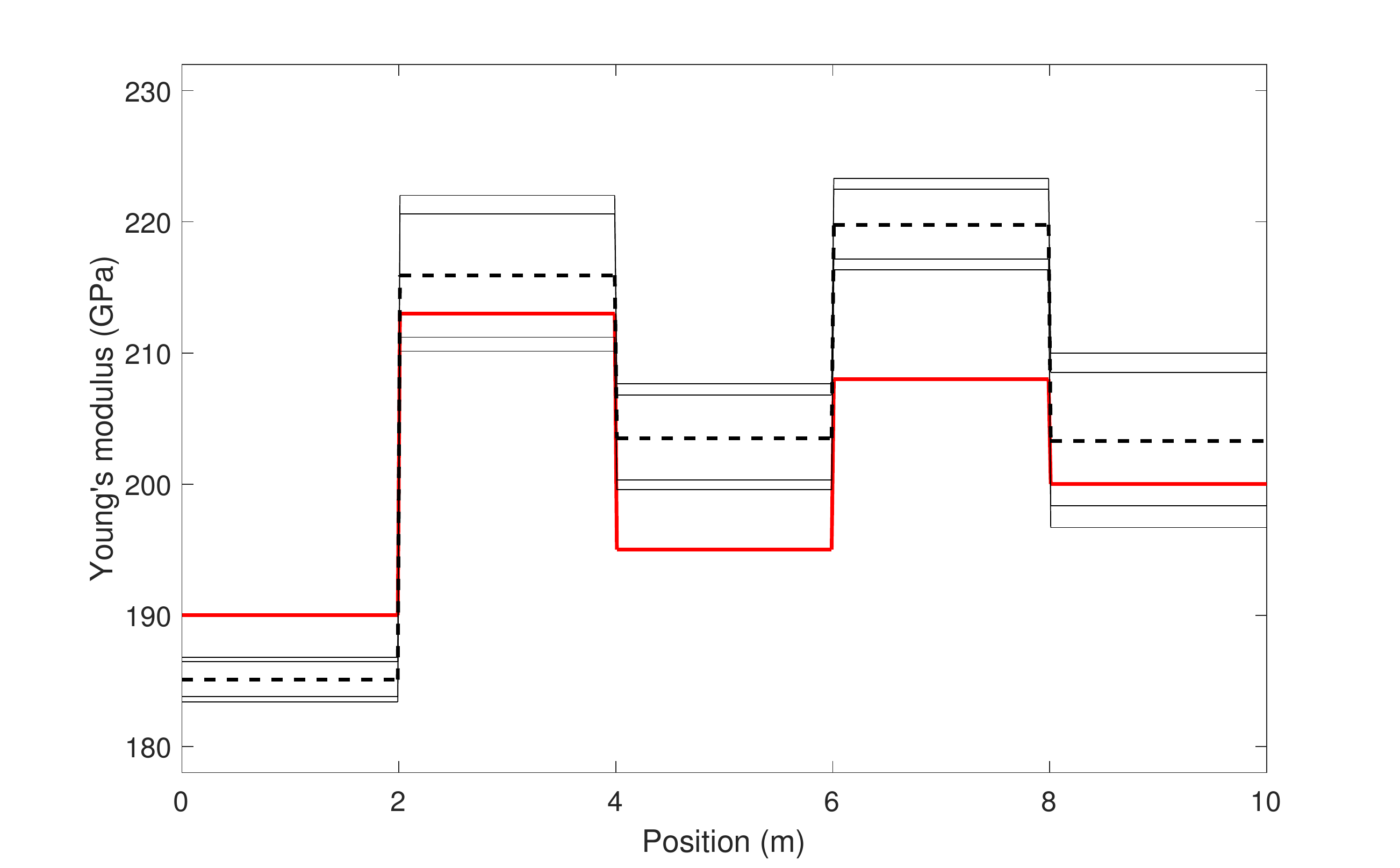}
			\caption{True vs. posterior without grid learning.}
			\label{fig:rdis5}
		\end{subfigure}
	
		\centering
		\begin{subfigure}{.33\textwidth}
			\centering
			\includegraphics[height = 3.5cm,width=1\linewidth]{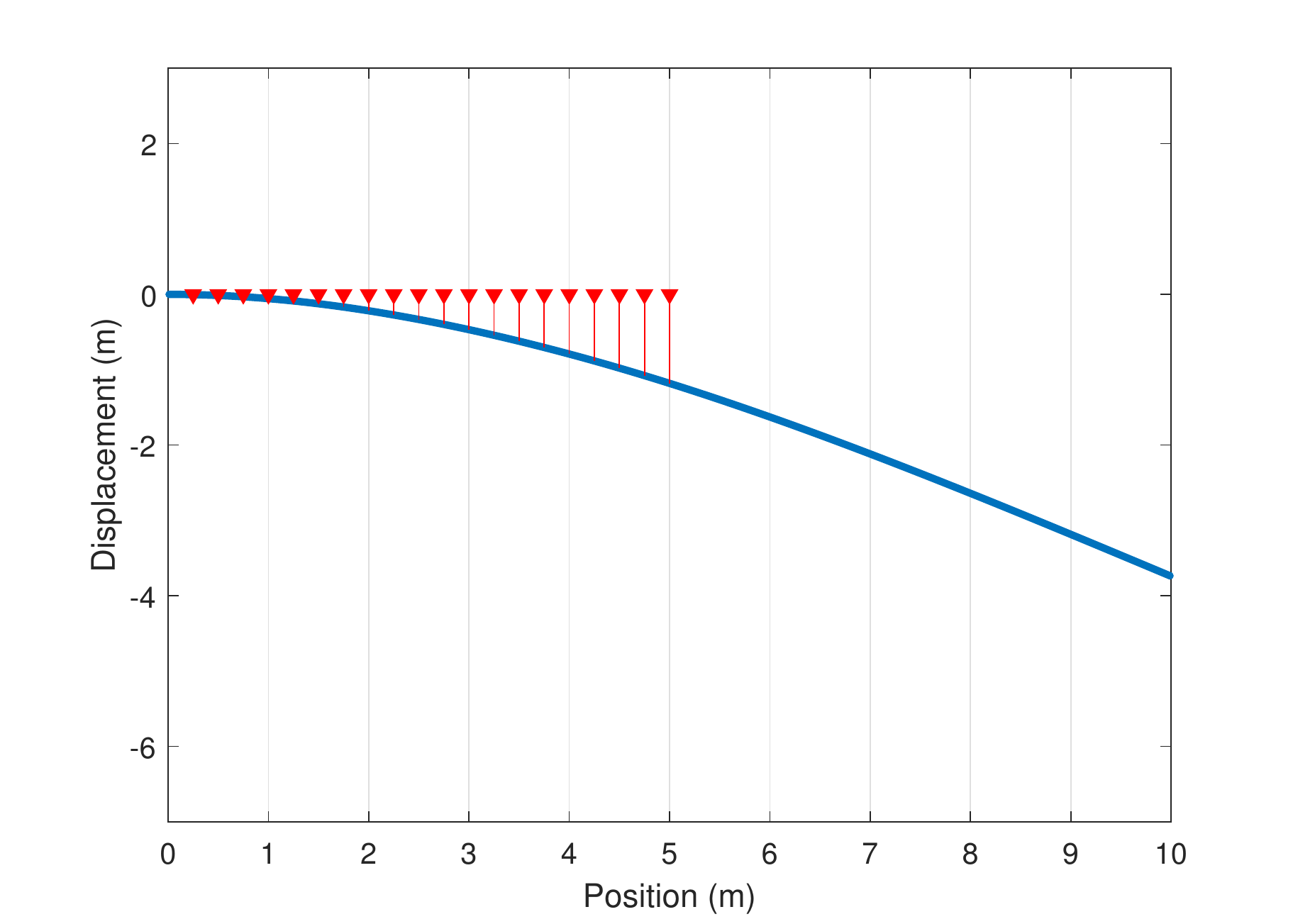}
			\caption{Observation locations.}
			\label{fig:ldis1}
		\end{subfigure}%
		\hfill
		\begin{subfigure}{.33\textwidth}
			\centering
			\includegraphics[height = 3.5cm,width=1\linewidth]{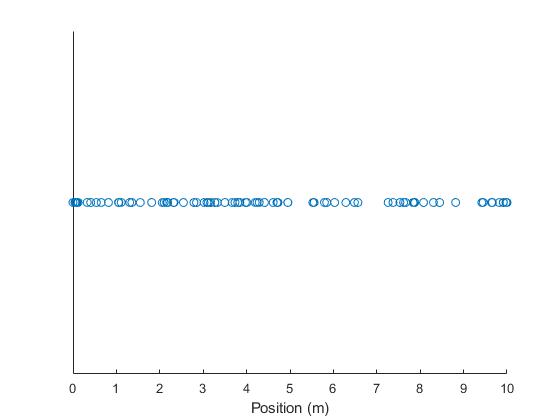}
			\caption{A sampled grid.}
			\label{fig:ldis2}
		\end{subfigure}
		\hfill
		\begin{subfigure}{.33\textwidth}
			\centering
			\includegraphics[height = 3.5cm,width=1\linewidth]{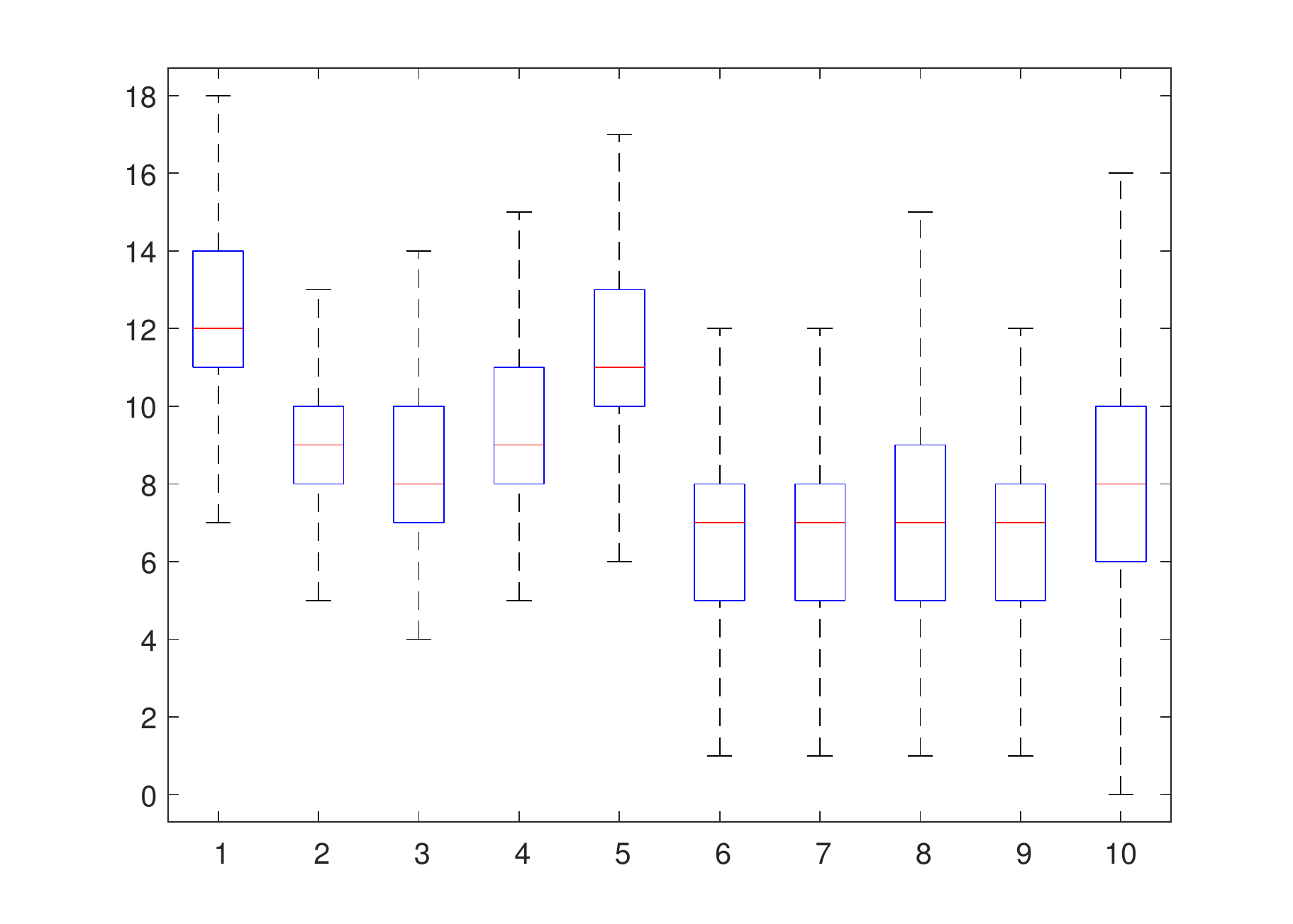}
			\caption{Number of grid points.}
			\label{fig:ldis3}
		\end{subfigure}
		\vskip\baselineskip	
		\begin{subfigure}{.45\textwidth}
			\centering
			\includegraphics[height = 4.5cm,width=8.25cm]{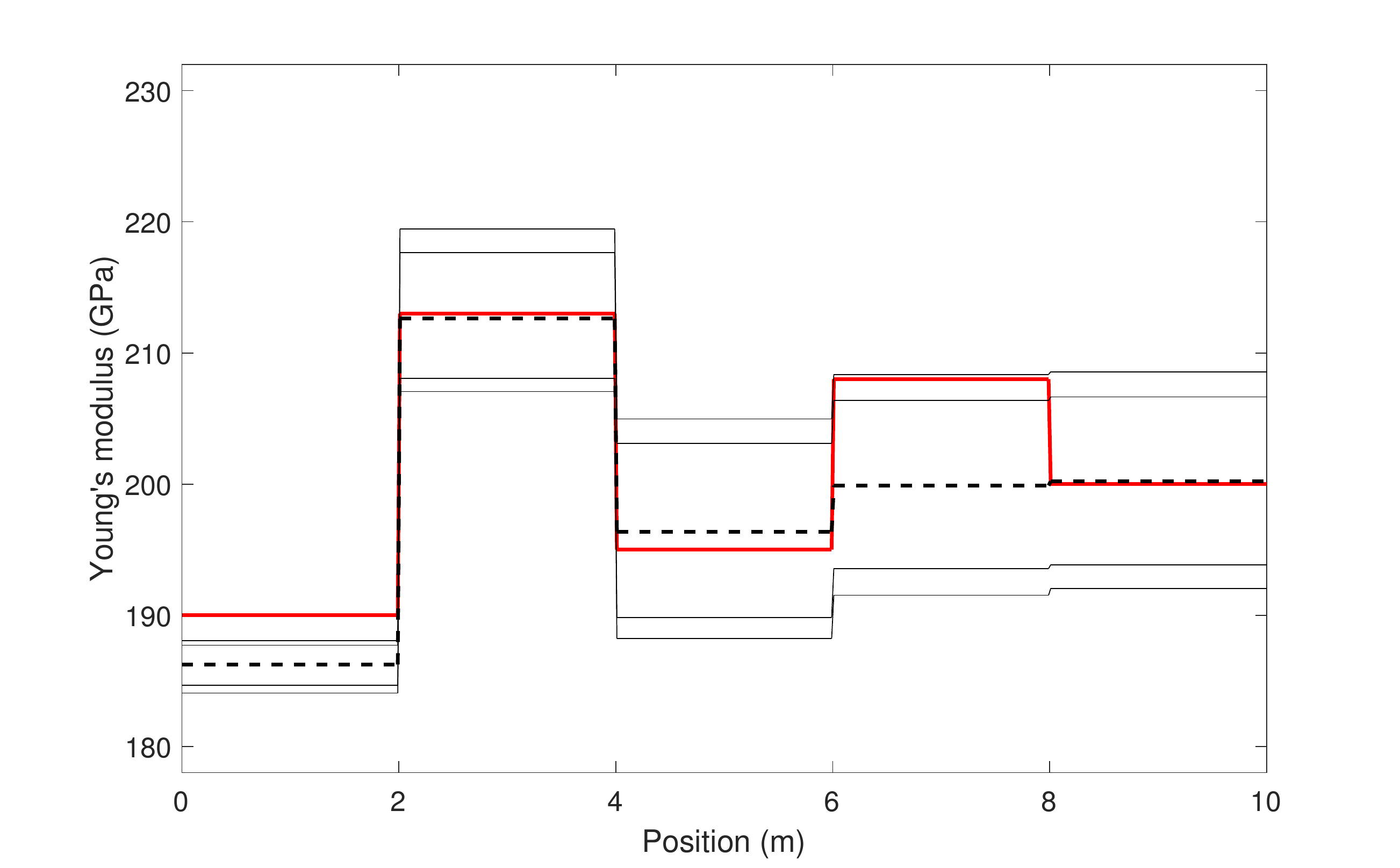}
			\caption{True vs. posterior with grid learning.}
			\label{fig:ldis4}
		\end{subfigure}
		\hfill
		\begin{subfigure}{.45\textwidth}
			\centering
			\includegraphics[height = 4.5cm,width=8.25cm]{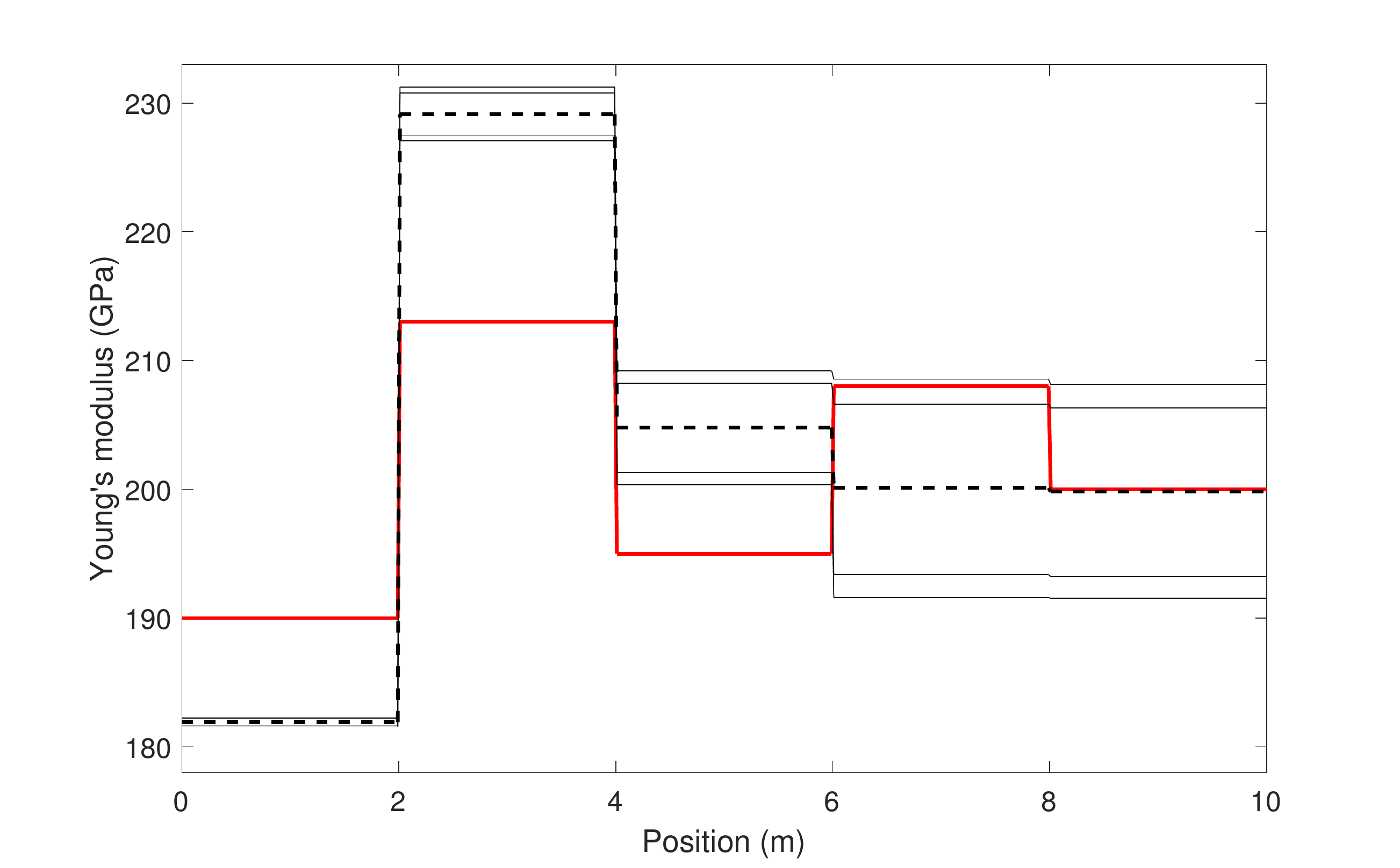}
			\caption{True vs. posterior without grid learning.}
			\label{fig:ldis5}
		\end{subfigure}
\caption{Reconstruction of a piece-wise constant Young's modulus. Two settings for the observation locations are considered, shown in Figures \ref{fig:rdis1}, \ref{fig:ldis1}. For each setting, Figures \ref{fig:rdis2} and \ref{fig:ldis2} show one sample from the marginal distribution $q_{a|y}(a)$ simulated by MCMC. Figures \ref{fig:rdis3} and \ref{fig:ldis3} report box-plots with the number of grid points that fall in each subinterval $[i-1,i]$, $i=1,\dots,10$. Figures \ref{fig:rdis4}, \ref{fig:ldis4} show the mean (dashed black) and the 5, 10, 90, 95-percentiles (thin black) of the marginal $q_{u|y}(u)$, versus the true value (red), with data-driven forward discretization. Figures \ref{fig:rdis5}, \ref{fig:ldis5} show the same results with a fixed uniform-grid discretization.}
		\label{fig:cantileverdisc}
	\end{figure}

			\begin{figure}
		\centering
        \begin{subfigure}{.33\textwidth}
			\centering
			\includegraphics[height = 3.5cm,width=1\linewidth]{beam/result1-b-eps-converted-to.pdf}
			\caption{Observation locations.}
			\label{fig:rcont1}
		\end{subfigure}%
		\hfill
		\begin{subfigure}{.33\textwidth}
			\centering
			\includegraphics[height = 3.5cm,width=1\linewidth]{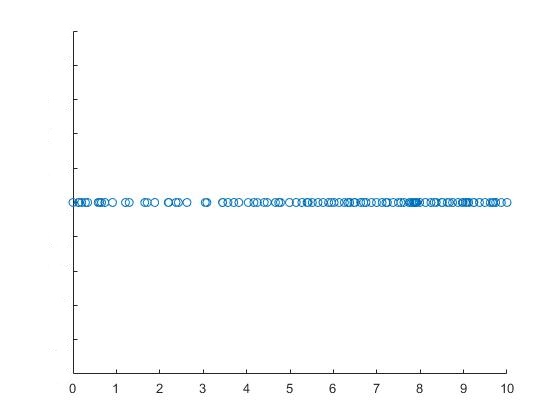}	
			\caption{A sampled grid.}
			\label{fig:rcont2}
		\end{subfigure}
		\hfill
		\begin{subfigure}{.33\textwidth}
			\centering
			\includegraphics[height = 3.5cm,width=1\linewidth]{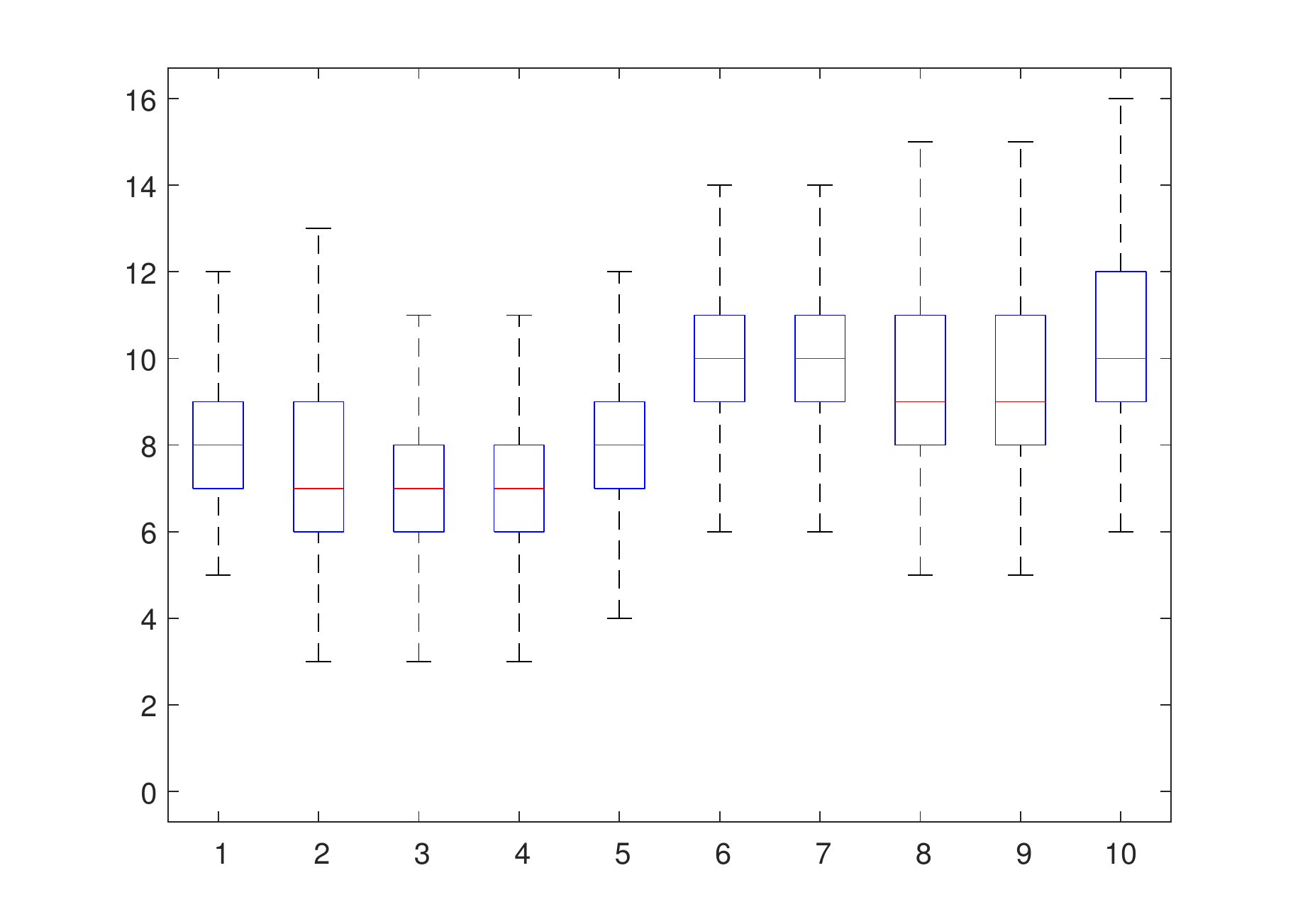}
			\caption{Number of grid points.}
			\label{fig:rcont3}
		\end{subfigure}
		\vskip\baselineskip

		\begin{subfigure}{.45\textwidth}
			\centering
			\includegraphics[height = 4.5cm,width=8.25cm]{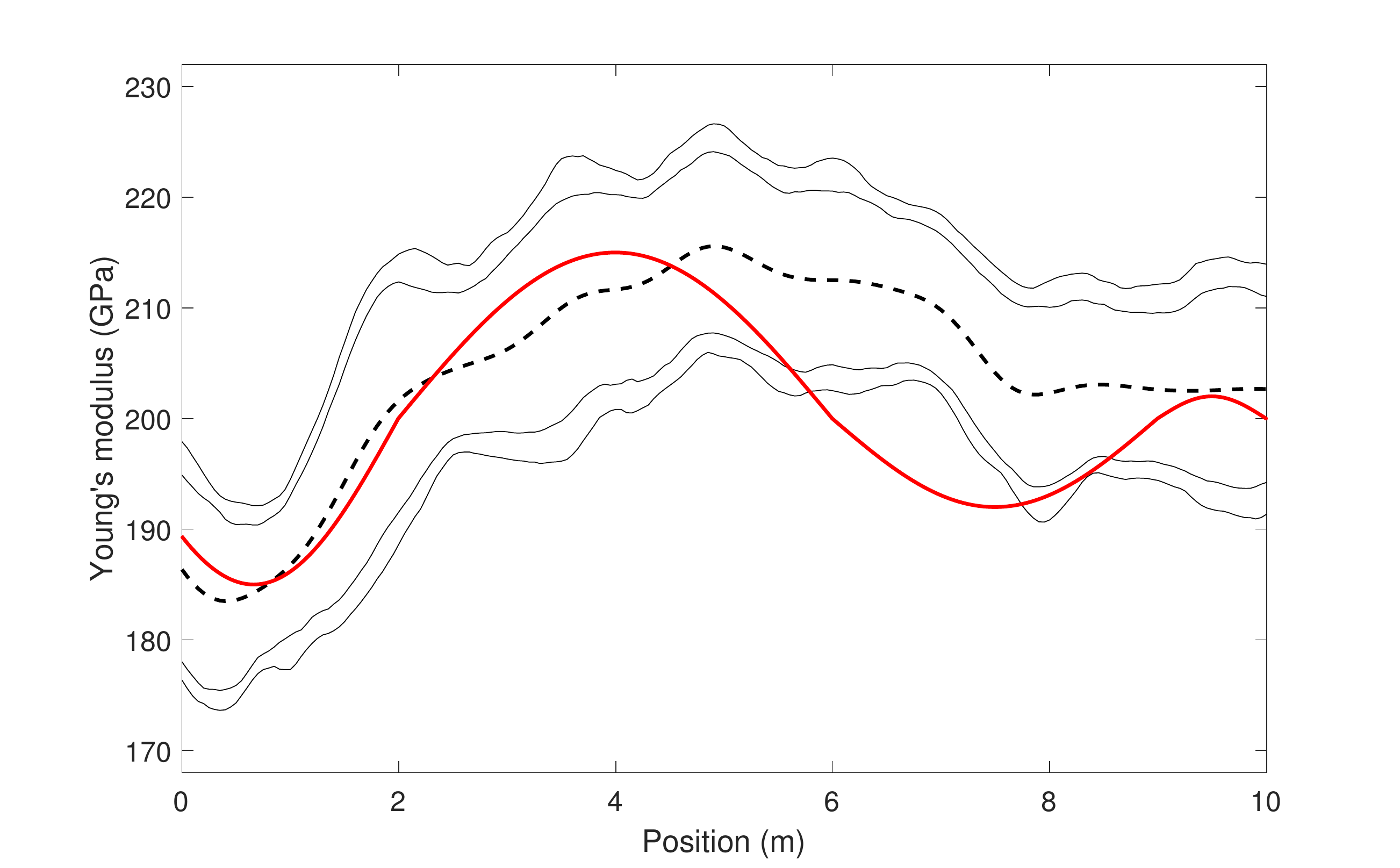}
			\caption{True vs. posterior with grid learning.}
			\label{fig:rcont4}
		\end{subfigure}
		\hfill
		\begin{subfigure}{.45\textwidth}
			\centering
			\hspace{-1cm}
			\includegraphics[height = 4.5cm,width=8.25cm]{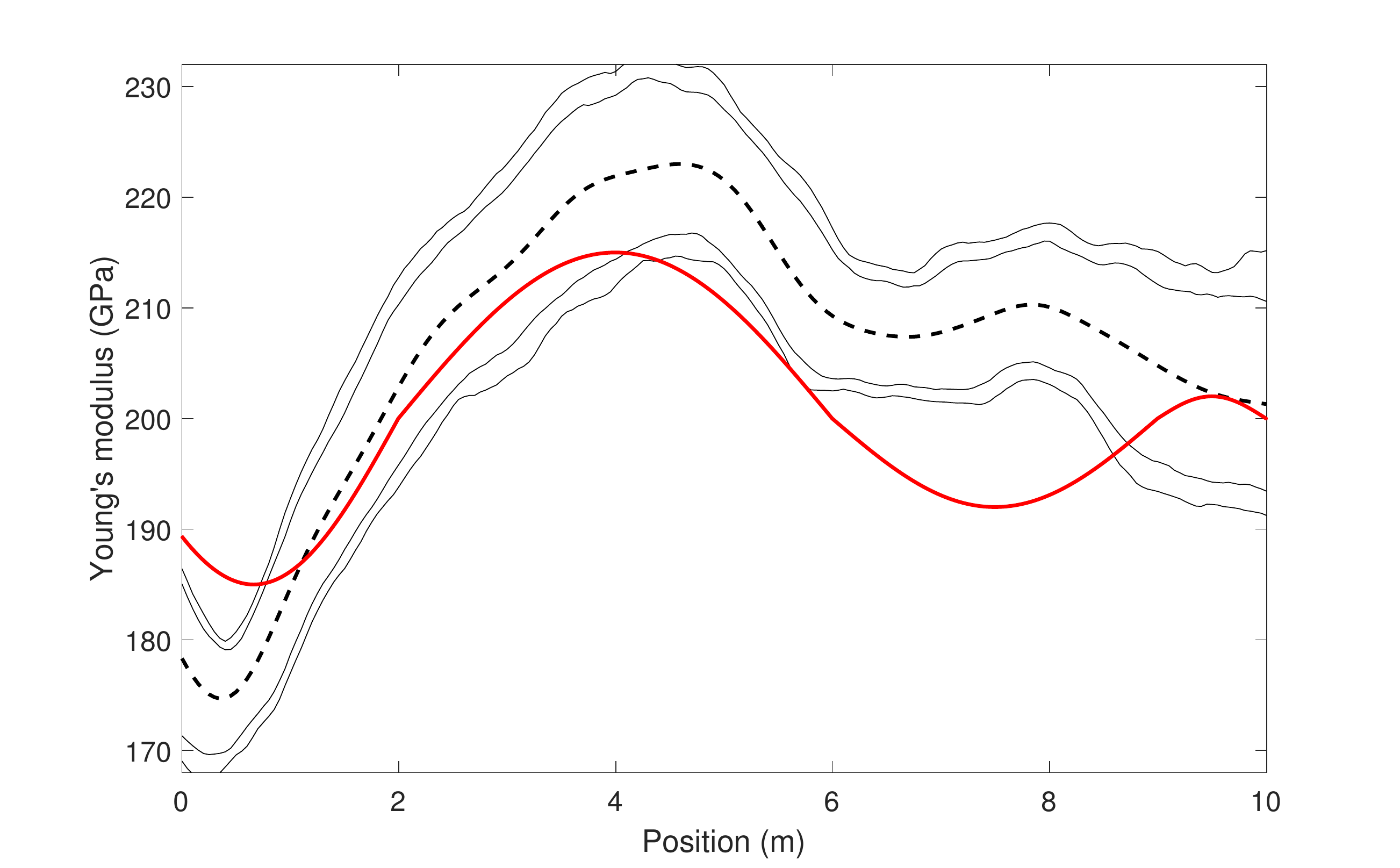}	
			\caption{True vs. posterior without grid learning.}
			\label{fig:rcont5}
		\end{subfigure}
			\centering
        \begin{subfigure}{.33\textwidth}
		\centering
		\includegraphics[height = 3.5cm,width=1\linewidth]{beam/result2-b-eps-converted-to.pdf}
		\caption{Observation locations.}
		\label{fig:lcont1}
	\end{subfigure}%
	\hfill
	\begin{subfigure}{.33\textwidth}
		\centering
		\includegraphics[height = 3.5cm,width=1\linewidth]{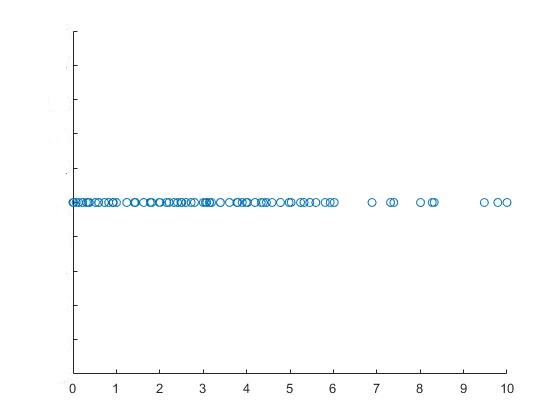}
		\caption{A sampled grid.}
		\label{fig:lcont2}
	\end{subfigure}
	\hfill
	\begin{subfigure}{.33\textwidth}
		\centering
		\includegraphics[height = 3.5cm,width=1\linewidth]{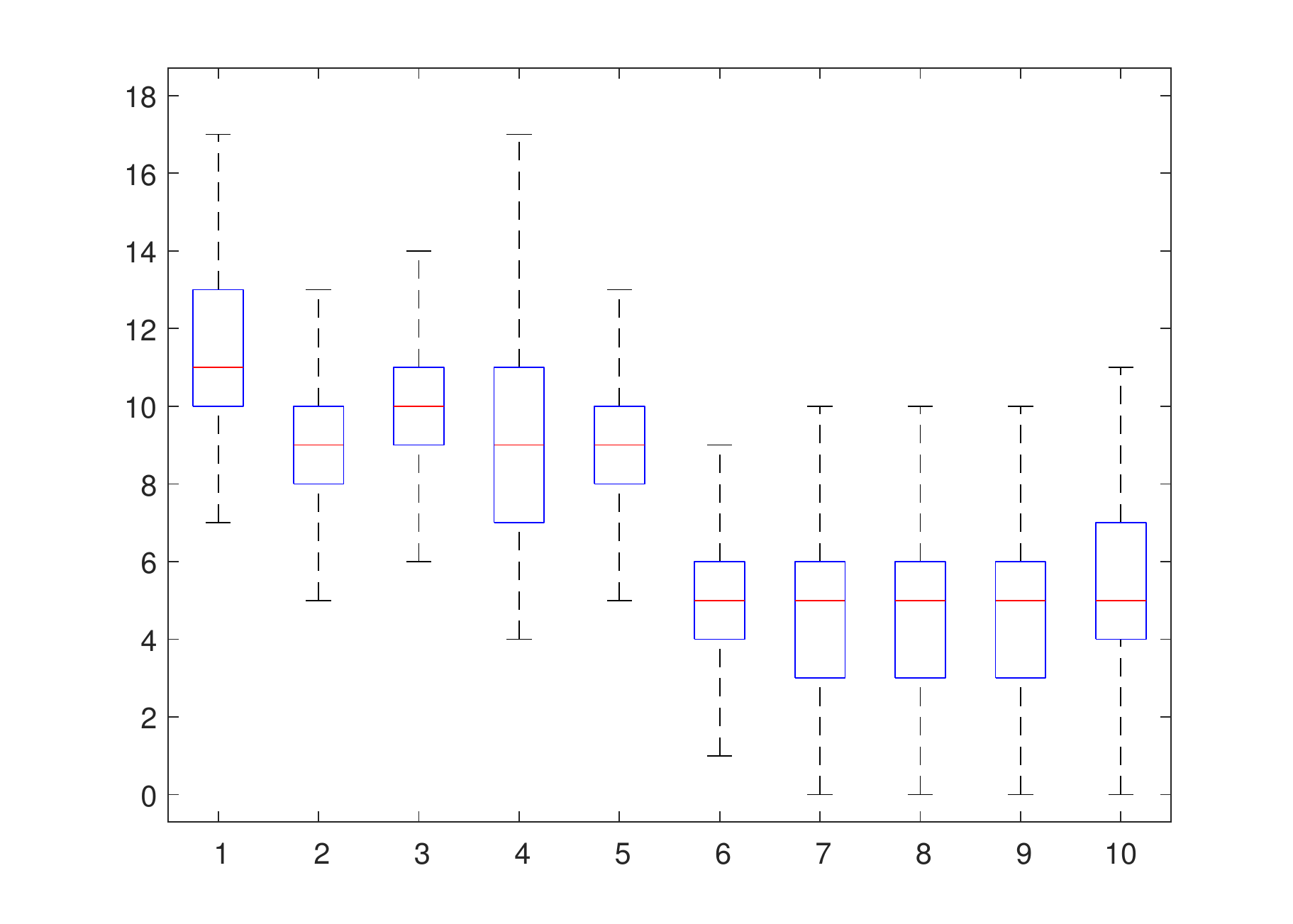}
		\caption{Number of grid points.}
		\label{fig:lcont3}
	\end{subfigure}
	\vskip\baselineskip

	\begin{subfigure}{.45\textwidth}
		\centering
		
		\includegraphics[height = 4.5cm,width=8.25cm]{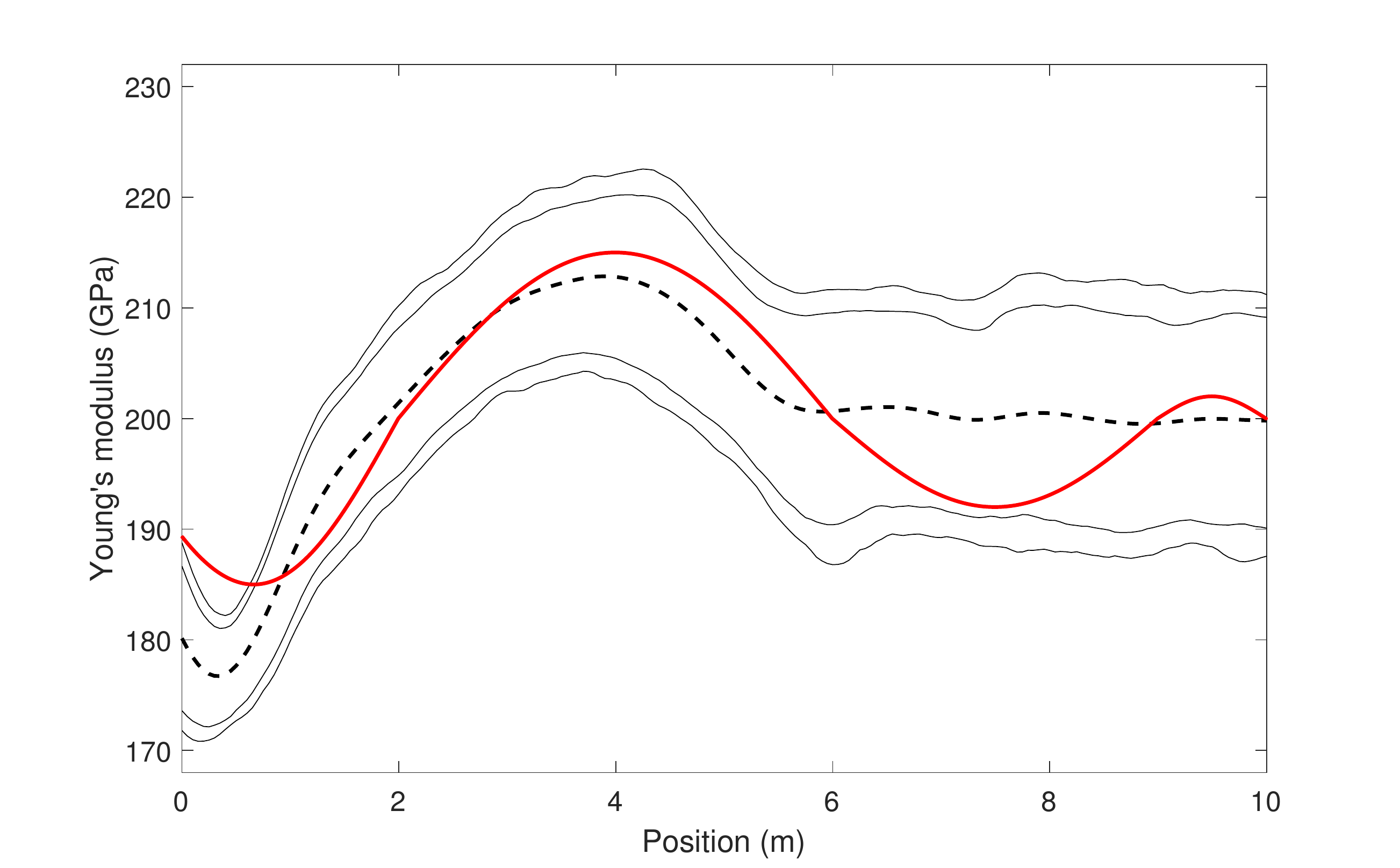}
		\caption{True vs. posterior with grid learning.}
		\label{fig:lcont4}
	\end{subfigure}
	\hfill
	\begin{subfigure}{.45\textwidth}
	\hspace{-1cm}
		\centering
		\includegraphics[height = 4.5cm,width=8.25cm]{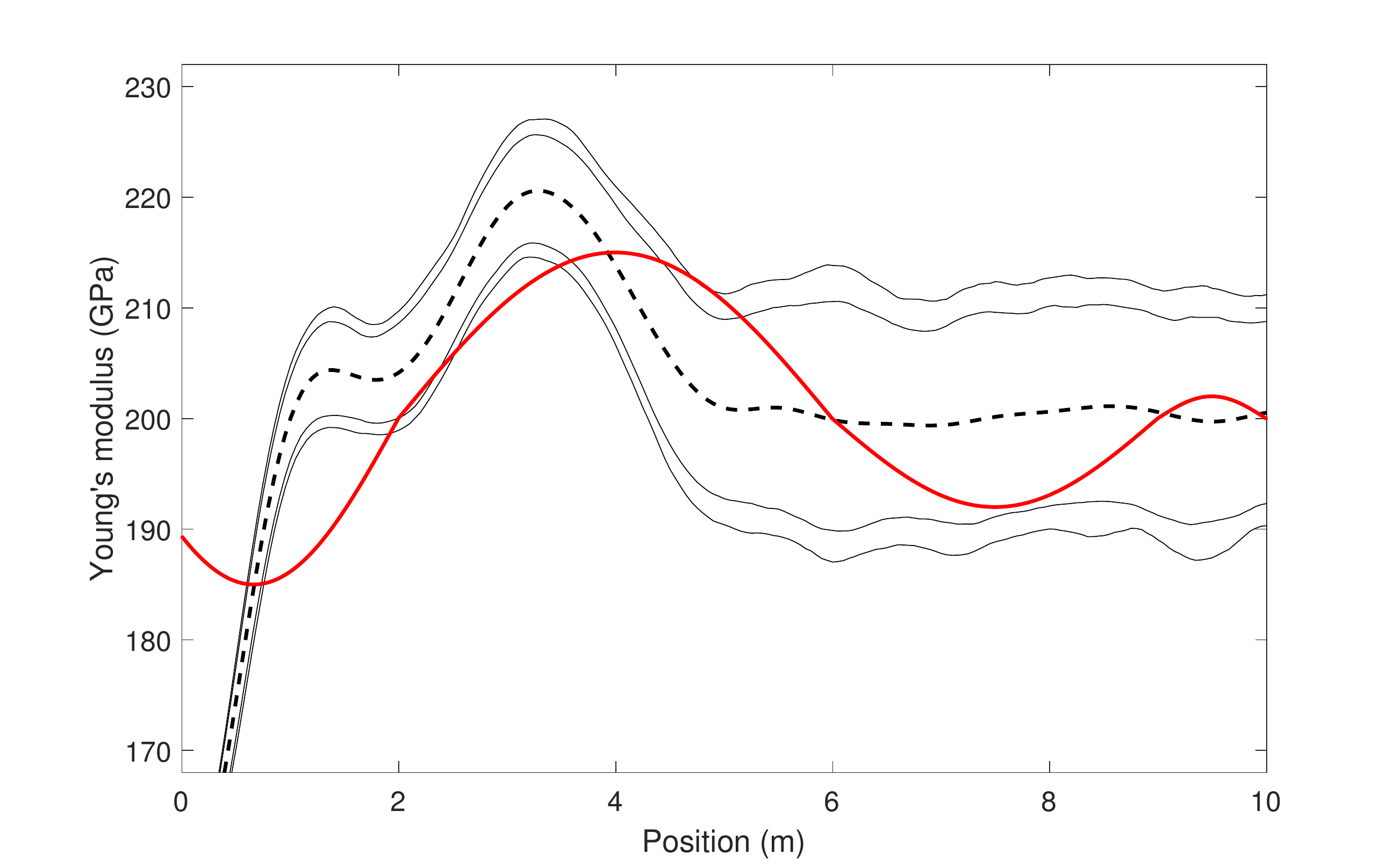}
		\caption{True vs. posterior without grid learning.}
		\label{fig:lcont5}
	\end{subfigure}
\caption{Reconstruction of a continuous Young's modulus. Two settings for the observation locations are considered, shown in Figures \ref{fig:rcont1}, \ref{fig:lcont1}. For each setting, Figures \ref{fig:rcont2} and \ref{fig:lcont2} show one sample from the marginal distribution $q_{a|y}(a)$ simulated by MCMC.  Figures \ref{fig:rcont3} and \ref{fig:lcont3} report box plots with number of grid points that fall in each subinterval $[i-1,i]$, $i=1,\dots,10$. Figures \ref{fig:rcont4}, \ref{fig:lcont4} show the mean (dashed black) and the 5, 10, 90, 95-percentiles (thin black) of the marginal $q_{u|y}(u)$, versus the true value (red), with data-driven forward discretization. Figures \ref{fig:rcont5}, \ref{fig:lcont5} show the same results with a fixed uniform-grid discretization.}
	\label{fig:cantilevercont}
	\end{figure}

	To solve system \eqref{eq:cantilever} we employ a finite difference method. A family of numerical solutions can be parameterized by the set
	\begin{equation*}
	\A:=\Bigl\{a=(k,\theta):k\in\mathcal{K}\subset\{1,2,\dots\},\theta=[x_1,\dots,x_k]\in[0,L]^k \Bigr\},
	\end{equation*}
	where $k$ is the number of grid points and $\theta$ are the grid locations. Precisely, for $a=(k,\theta)\in\A$, we first reorder $\theta$ so that 
	\begin{equation*}
	0=:x_0\leq x_1\leq \dots \leq x_k\leq x_{k+1}:=L
	\end{equation*}
	and we let $$\mathcal{F}^a:u\mapsto z^a$$ be the linearly interpolated explicit Euler finite difference solution to \eqref{eq:cantilever}, discretized using the ordered grid $\theta$. We also discretize the observation operator $\mathcal{O}$ using an Euler forward method, defined by
	\begin{equation*}
	 \mathcal{O}^a_i(z^a)=\sum_{j=0}^kz^a(x_j)\phi_i(x_j)(x_{j+1}-x_j).
	\end{equation*}
	Finally $\mathcal{G}$ is approximated by  $\mathcal{G}^a:=\mathcal{O}^a\circ\mathcal{F}^a.$	
	
	\subsubsection{Implementation Details and Numerical Results}\label{sec:beamimplem}
	For our numerical experiments we consider a beam of length $L=10\mbox{ m}$, width $w=0.1 \mbox{ m}$ and thickness $h=0.3\mbox{ m}$. We use a Poisson ratio $r=0.28$ and Timoshenko shear coefficient $\kappa=5/6$. $A=wh$ represents the cross-sectional area of the beam and $I=wh^3/12$ is the second moment of inertia. We run a virtual experiment of applying a point mass of $5\mbox{ kg}$ at the end of the beam, as seen in blue in Figures \ref{fig:rdis1} and \ref{fig:rcont1}. We assume that the observations are gathered with error $\gamma_{obs}^2=10^{-3}.$
	
	We first assume that the beam is made of 5 segments of different kinds of steel, each of length $2\mbox{ m}$, with corresponding Young's moduli $u^*=\{u_i^*\}_{i=1}^5=\{190, 213, 195, 208, 200\gpa\}$. The prior on $u \in \U = \R^5$ is given by $p_u(u) = \mathcal{N}(u;200 {\bf{1}}, 25 I_5)$ where $\bf{1}$ denotes the all-ones vector.  For this case we assume that the number of grid points $k$ is fixed to be $k=85$, i.e., the prior $p_k(k)$ is a point mass. The grid locations $\theta$ are assumed to be a priori uniformly distributed in $[0,L]^k$. Results are reported in Figure \ref{fig:cantileverdisc}. We next assume that the Young's modulus $u(x) \in \U = C([0,L]; \R)$ varies continuously with $x$. We set  a Gaussian process prior on $u$  defined by $\mu_u = \mathcal{GP}(200,c)$ with $c(x, x')=50\exp\Bigl(-(x-x')^2/0.5\Bigr)$. For this case we assume that the prior on the number of grid points $k$ follows a Poisson distribution with mean 60, i.e., $\nu_k(k)=$Poisson$(60)$, and the grid locations $\theta$ still have a uniform prior given $k$.\nc The true Young's modulus underlying the data and the reconstruction results are reported in Figure \ref{fig:cantilevercont}. 	Sampling is performed, both in the discrete and continuous settings, updating $u$ and $a$ alternately for a total number of $N=1.2\times 10^5$ iterations, with $\beta=0.08$, $\zeta=0.5$.

	\begin{figure}
	    \centering
	    \includegraphics[width=0.6\linewidth, height=4.75cm]{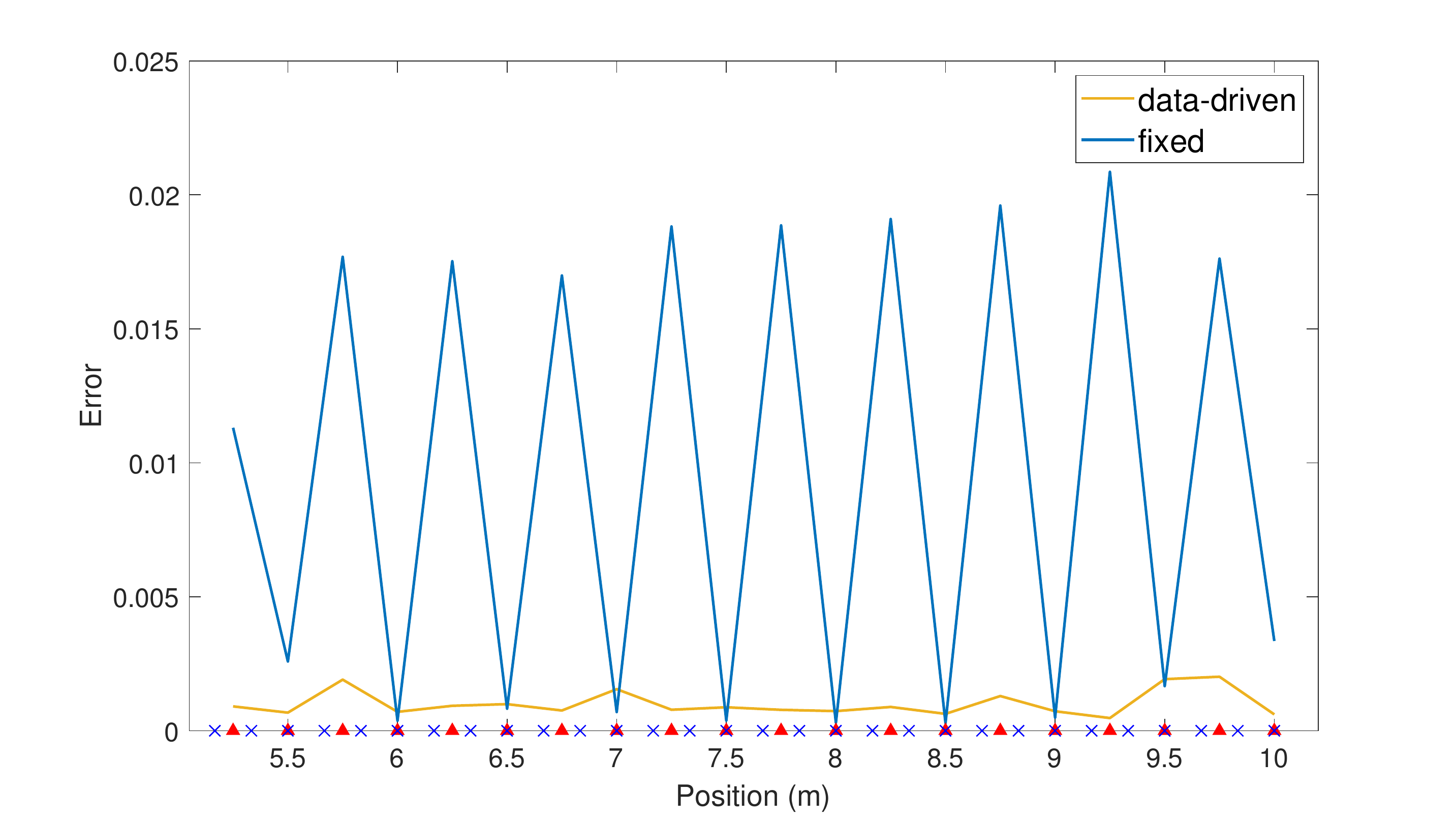}
	    \caption{The reconstruction error with fixed-grid discretization (blue) and with data-driven grid discretization (orange). Red triangles are observations locations, while blue crosses are the grid points used in the fixed-grid discretization.}
	    \label{fig:wiggly}
	\end{figure}

In both Figures \ref{fig:cantileverdisc} and \ref{fig:cantilevercont} two settings are considered. 
 In the first one observations are concentrated on the right side of the beam and in the second on the left.
 For reference, Figure \ref{figure:idealized} shows idealized  posteriors  considered, obtained with \emph{very} fine discretizations $k=500$, for each of the settings.  Notice that for system \eqref{eq:cantilever} with proper boundary conditions specified at $x=0$, the displacement  $z(x_0)$ at any point $0 <x_0<L$  depends only on the values $u(x)$ of Young's moduli with $x<x_0.$ This implies that when observations are gathered on the left side of the beam, the posterior on $u(x)$ agrees with the prior on the right-side, and no resources should be on discretizing the forward map on that region. In that case our adaptive data-driven discretizations are strongly concentrated on the left, as shown in Figures \ref{fig:ldis2}, \ref{fig:ldis3} and \ref{fig:lcont2}, \ref{fig:lcont3}. However, when observations are gathered on the right side of the beam, the data is informative on $u(x)$ for all $0<x<L.$ In such case, Figures \ref{fig:rdis2}, \ref{fig:rdis3} and \ref{fig:rcont2}, \ref{fig:rcont3} show that the data-driven discretizations are concentrated on the right, but less heavily so. See Tables \ref{tb:1}, \ref{tb:2} in the appendix for a more detailed description of the grid points distribution in both cases. Also, our results indicate that using data-driven discretizations will lead to a better estimation of the true Young's modulus, compared to fixed-grid discretizations. Additional results in the continuous Young's modulus setting are provided in the appendix. See Table \ref{accepttable} and Figure \ref{runningsample} for the averaged acceptence probability for $u$ and $a$, and history of MCMC samples of the high-dimensional $u$ at some fixed locations, indicating the stationarity of the Markov chain.

	Let $(u^{(n)},a^{(n)})$ be the output of the Gibbs sampling algorithm at iteration $n$. The \emph{reconstruction error} is defined as follows:
	\begin{equation}
		e_r=\sqrt{\sum_{n=1}^{N} \bigr|\G^{a^{(n)}}(u^{(n)})-\G(u)\bigl|^2\,\,}\,\,,
	\end{equation}
	where $\G(u)$ is approximately calculated on a very fine grid. In Figure \ref{fig:wiggly} we plot the reconstruction error for the second experiment where the Young's moduli is continuous and observations are gathered on the right-side. With fixed-grid discretization, the reconstruction error is small where the discretization matches the observation points. With adaptive data-driven discretizations the grid points will adaptively match the observation points in order to produce less error.

			\begin{figure}
		\centering
		\begin{subfigure}{.45\textwidth}
			\centering
			\includegraphics[height = 4.2cm,width=8cm]{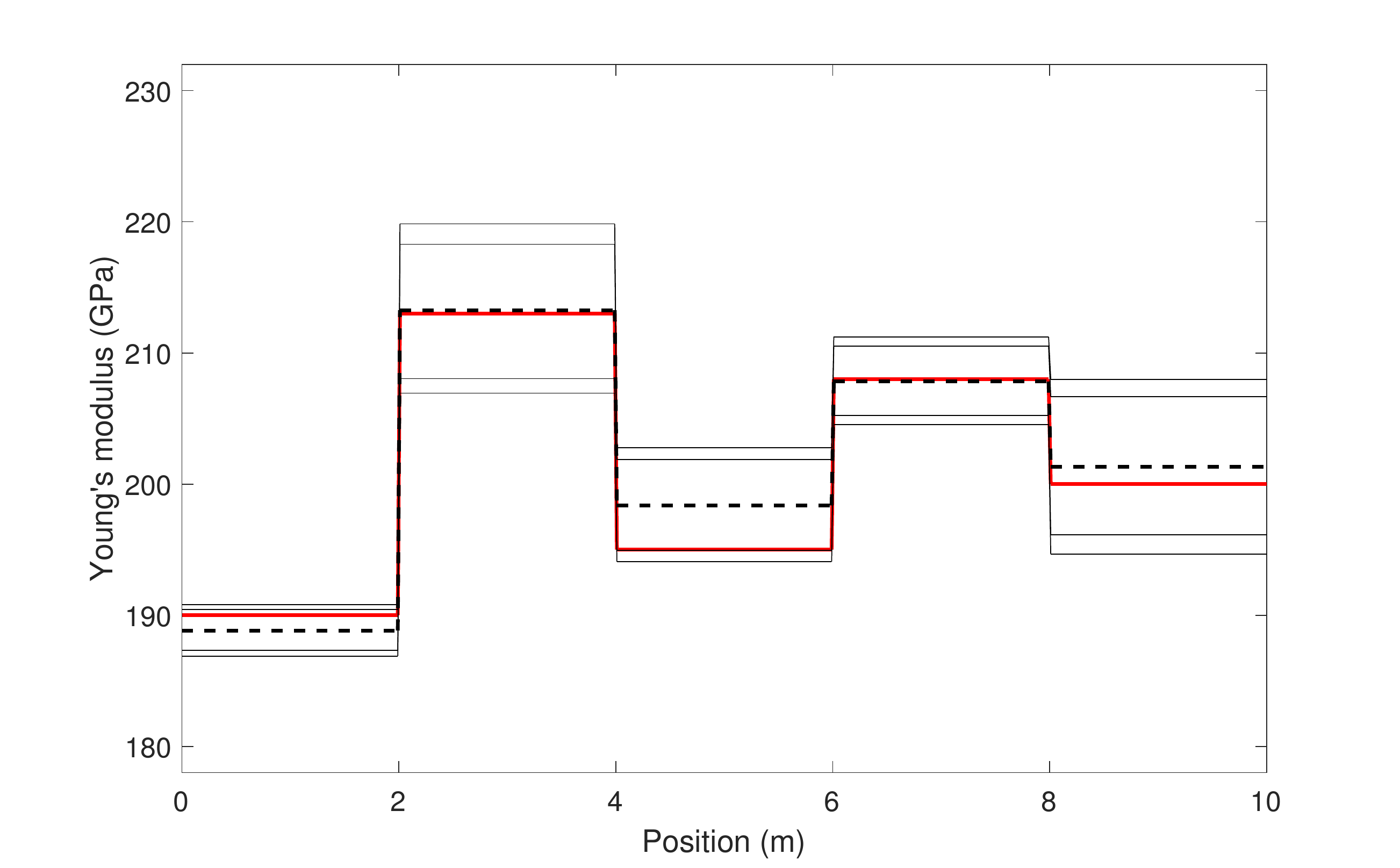}
			\caption{Piece-wise constant,  right observations.}
			\label{fig:i1}
		\end{subfigure}%
		\hfill
		\begin{subfigure}{.45\textwidth}
			\centering
			\includegraphics[height = 4.2cm,width=8cm]{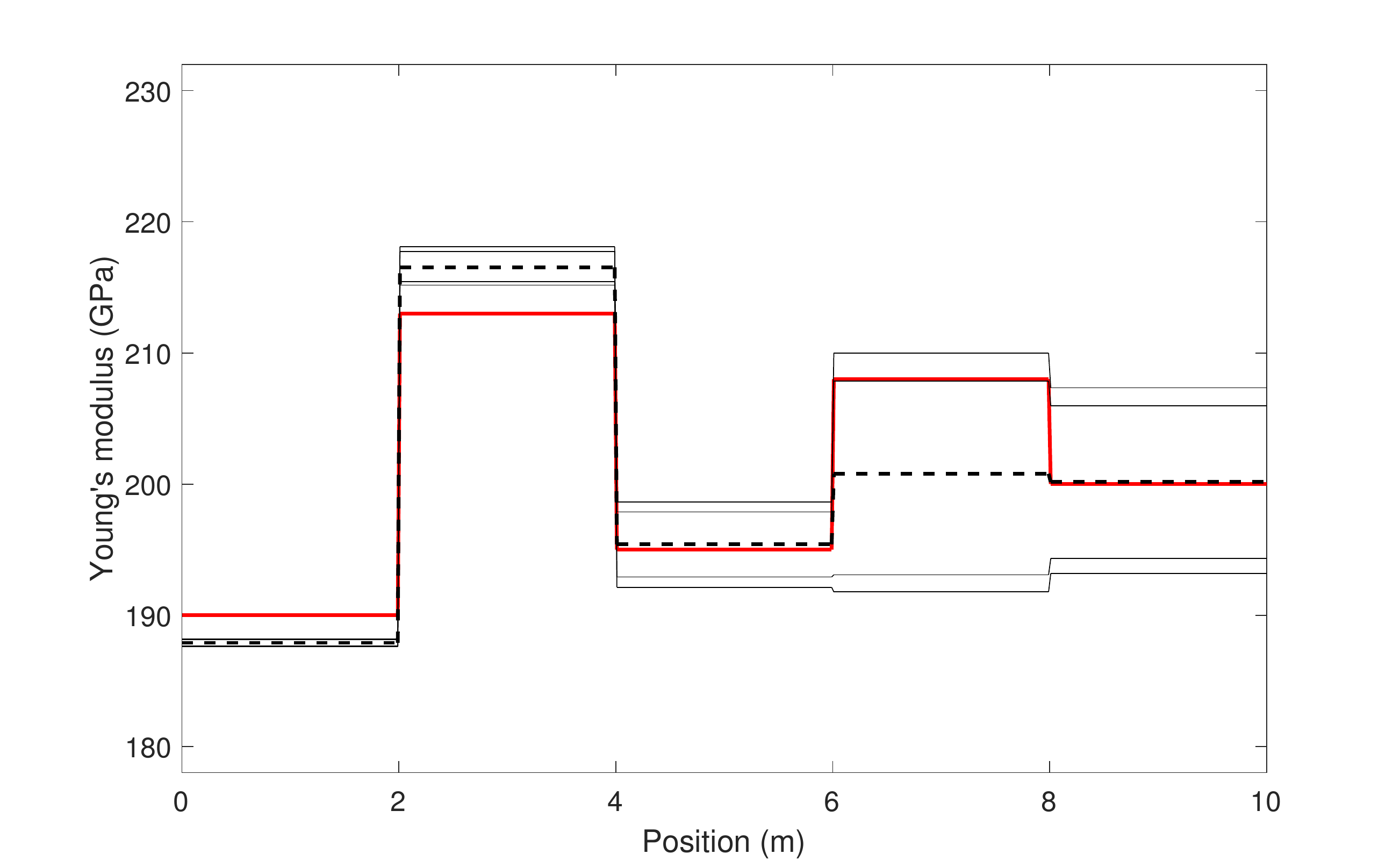}
			\caption{Piece-wise constant,  left observations.}
			\label{fig:i2}
		\end{subfigure}%
		\vskip\baselineskip
		\begin{subfigure}{.45\textwidth}
			\centering
			\includegraphics[height = 4.2cm,width=8cm]{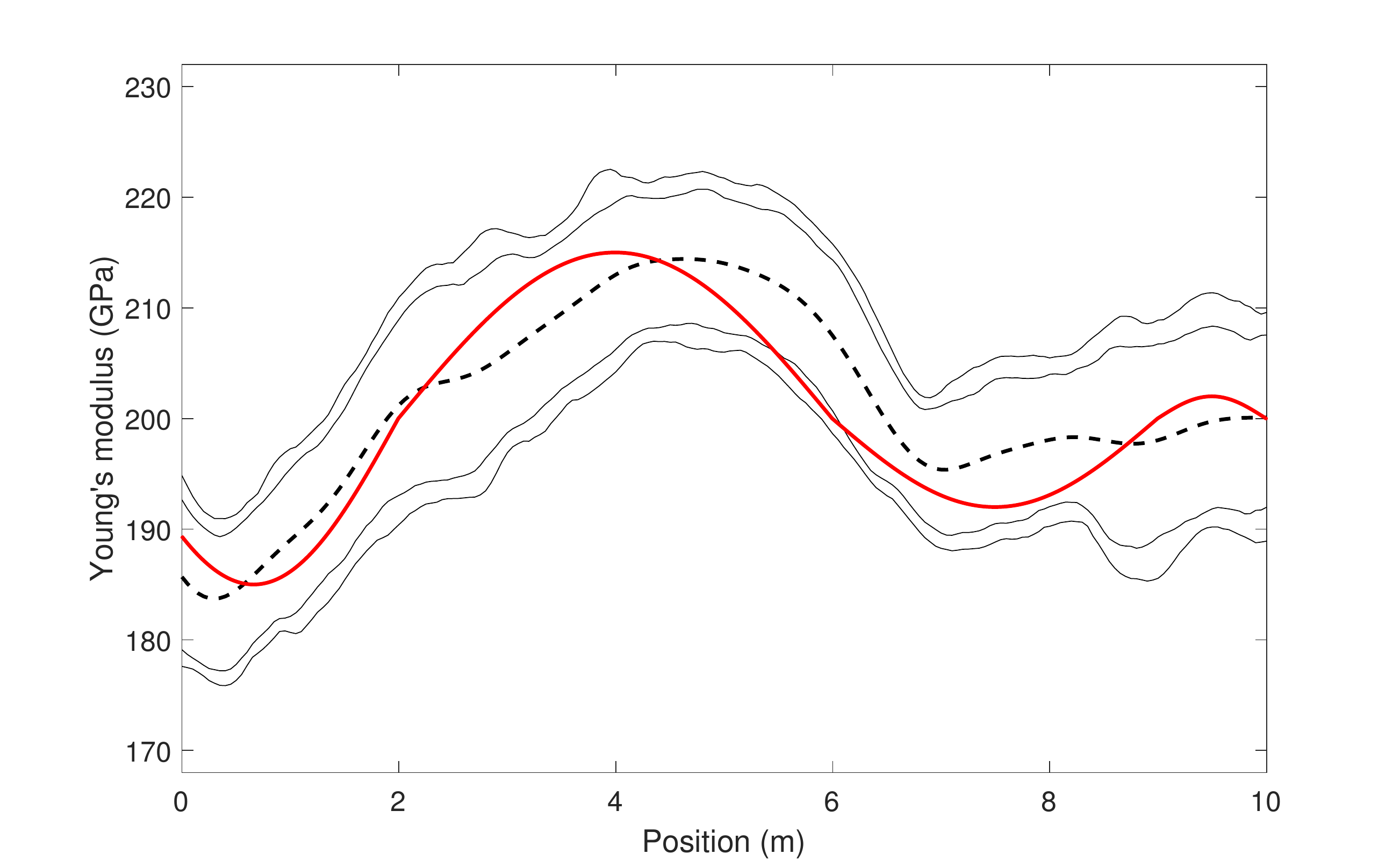}
			\caption{Continuous,  right observations.}
			\label{fig:i3}
		\end{subfigure}%
		\hfill
		\begin{subfigure}{.45\textwidth}
			\centering
			\includegraphics[height = 4.2cm,width=8cm]{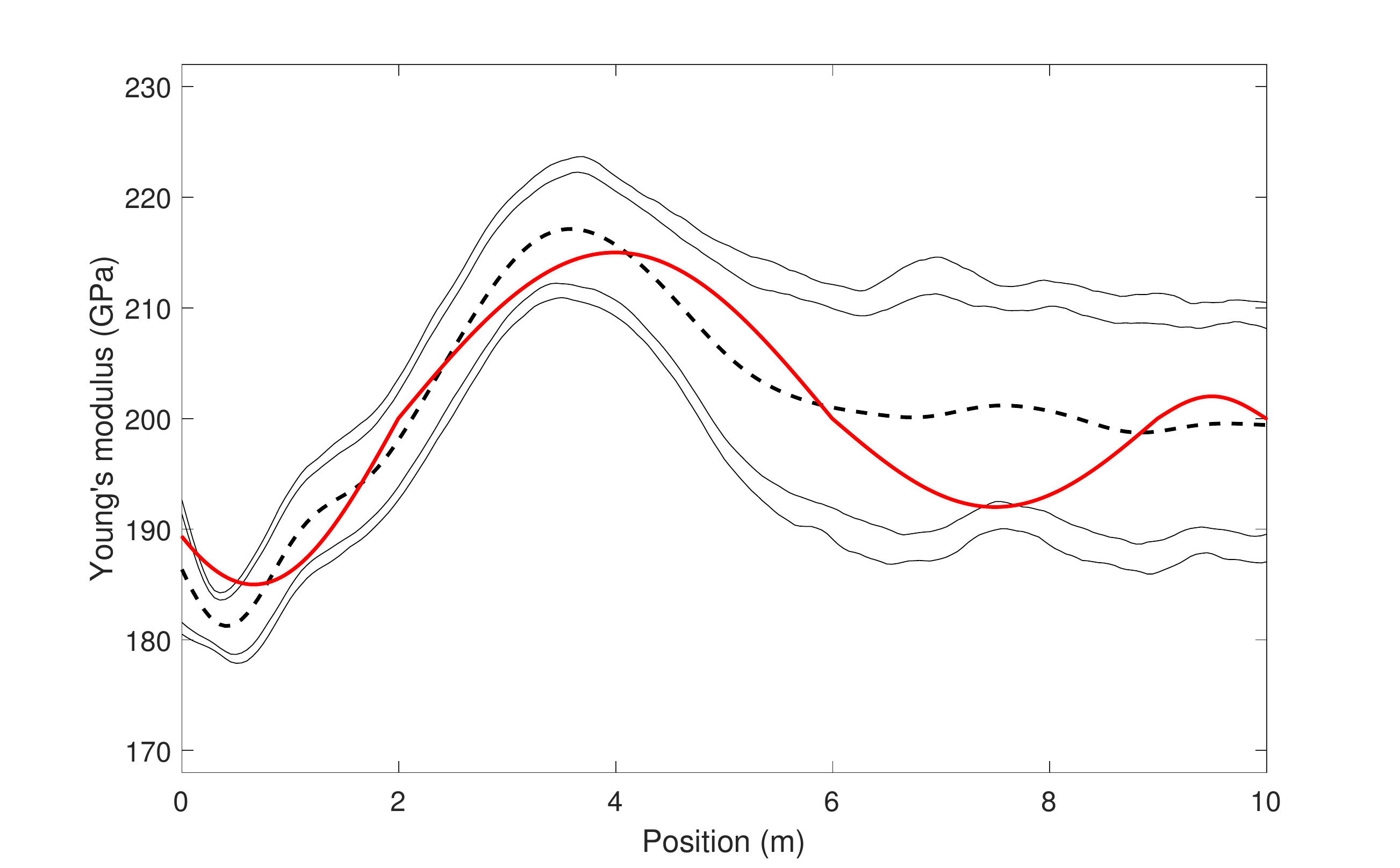}
			\caption{Continuous, left observations.}
			\label{fig:i4}
		\end{subfigure}%
		\caption{\label{figure:idealized} Idealized posterior $p_{u|y}(u)$, with mean (dashed black) and the 5, 10, 90, 95-percentiles (thin black), versus the true value (red).}
		\label{fig:i}
		\end{figure}

	\subsection{Euler-Maruyama Discretization of SDEs: a Signal Processing Application}\label{sec:SDE}
	Let $f:\R^d \to \R^d$ be globally Lipschitz continuous and consider the SDE 
	\begin{equation}\label{eq:SDE}
	d\z(t) = f(\z)\, dt + d u, \quad 0<t\le T, \quad \quad \z(0) = 0,
	\end{equation}
	where $u$ denotes $d$-dimensional Brownian motion.
	We aim to recover  $u$ from observations of the solution $\z$. We suppose that the observations $y = [y_1, \ldots, y_{m}]$ are given by 
	\begin{equation}
	y_i = \z(t_i) + \eta_i, \quad \quad i = 1, \ldots, m,
	\end{equation}
	where $\eta = [\eta_1, \ldots, \eta_{m}]$ is assumed to follow a centered Gaussian distribution with covariance $\Gamma$ and 
	$$0 < t_1< \cdots <t_{m} < T$$
	are given observation times.
	Following \cite{hairer2011signal},  we cast the problem in the setting of Section \ref{sec:fullproblem}.
	First note that the solution to the integral equation
	\begin{equation}\label{eq:FforSDE1}
	\z(t) = \int_0^t f \bigl(\z(s)\bigr)\, ds + u(t), \quad 0 \le t \le T, 
	\end{equation}\
	defines a map
	\begin{align}\label{eq:FforSDE2}
	\F: C([0,T], \R^d) &\to C([0,T], \R^d) \\
	u & \mapsto \z \;.
	\end{align}
	Thus we set the input and output space to be $\U = \Z = C([0,T], \R^d).$
	
	Next we define an observation operator
	\begin{align*}
	\O : C([0,T], \R^d) &\to \R^{m} \\
	\z &\mapsto [\z(s_1), \ldots, \z(s_{m})]
	\end{align*} 
	and set $\G = \O \circ \F.$
	We put as prior on $u$ the standard $d$-dimensional Wiener measure, that we denote $\mu_u.$ 
	Then the posterior distribution $\mu_{u|y}$ is given by Equation \eqref{eq:posterior}, which
	if $\Gamma = \gamma^2 I_{m}$ may be rewritten as
	\begin{equation}\label{eq:likelihoodexpanded}
	\frac{d\mu_{u|y}}{d\mu_u}(u) \propto \exp\left(  \frac{1}{2\gamma^2} \sum_{i=1}^{m} \big|y_i - \G(u)(s_i)\big|^2 \right).
	\end{equation} 
	Note that the likelihood does not involve evaluation of $\F(u)$ at times $t>s_{m},$ and hence changing the definition of $\F(u)(t)$ for $t>s_{m}$ does not change the posterior measure. Thus, we expect suitable discretizations of the forward model to refine finely only times up to the right-most observation.

	\subsubsection{Forward Discretization}
	For most nonlinear drifts $f$,  the integral equation \eqref{eq:FforSDE2} cannot be solved in closed form and a numerical method needs to be employed. A family of numerical solutions may be parameterized by the set
	$$\A:= \Bigl\{ a = (k, \theta): \, k \in \K \subset\{1,2, \ldots\},\, \theta =[t_1, \ldots, t_k] \in [0,T]^k \Bigr\}.$$
	Precisely, for $a= (k,\theta) \in \A$ we define an approximate, Euler-Maruyama solution map 
	\begin{align*}\label{eq:FforSDE2}
	\F^a: C([0,T], \R^d) &\to C([0,T], \R^d) \\
	u & \mapsto \z^a
	\end{align*}
	as follows. First, we reorder the $t_j$'s so that 
	$$0 =: t_0 \le t_1 \le \ldots \le t_k \le t_{k+1} := T.$$
	Then we define  $\z^{a}_j := \z^a(t_j)$  as $z_0^a =0$, and 
	\begin{equation}
	\z^a_{j+1} = \z_j^a + (t_{j+1} - t_{j})f(\z^a_j) + u(t_{j+1}) - u(t_j), \quad 1 \le j \le k.
	\end{equation}
	Finally, for $t \in (t_j, t_{j+1})$ we define $\z^a(t)$ by linear interpolation of $\z^a_j$ and   $\z^a_{j+1}.$
	
	Having defined the parameter space $\A$ we now describe a choice of prior distribution $\nu_a$ on $\A$ and the resulting combined prior $\nu_{u,a}$ on $(u,a) \in \U \times \A.$
	First we choose a prior $\nu_k$ on the number $k$ of grid-points. Given the number of grid points $k$, assuming no knowledge on appropriate discretizations for the SDE \eqref{eq:SDE} we put a uniform prior on grid locations $\theta = [t_1, \ldots, t_k].$ 
	
	\begin{remark}
		More information could be put into the prior. In particular it seems natural to impose that grids are finer at the beginning of the time interval.
	\end{remark}

	\subsubsection{Implementation Details and Numerical Results}\label{ssec:numericsSDE}
	For our numerical experiments we considered the SDE \eqref{eq:SDE} with $T=10$ and double-well drift 
	\begin{equation}
	f(t) = 10\, t \,(1-t^2)/(1+t^2).
	\end{equation}
	We generated synthetic observation data $y$ by solving \eqref{eq:SDE} on a very fine grid, and then perturbing the solution at uniformly distributed times $t_i = 0.2i,\, \, i = 1, \ldots, 24=m$ so that the last observation corresponds to time $t=4.8.$ The observation noise was taken to uncorrelated, $\Gamma = \gamma^2 I_{m},$ with $\gamma = 0.1.$ The motivation for choosing this example is that there is certain intuition as to where one would desire the discretization grid-points to concentrate. Indeed, since all the observations $t_i$ are in the interval $[0.2,4.8]$ it is clear from Equation \eqref{eq:likelihoodexpanded} that any discretization points $t_j\in (4.8,10]$ will not contribute to better approximate $\mu_{u|y}.$ In other words, those grid points would help in approximating $\F$ but not in approximating $\G=\O\circ \F$.
	
	We report our results for a small grid-size $k=24.$ Similar but less dramatic effect was seen for larger grid size. Precisely, we chose our set of admissible grids to be given by
	$$[t_0, \ldots, t_{25}] : 0.01= t_0 \le  t_1 \le \ldots \le t_{25}  = 10 .$$
	For implementation purposes,  elements in the space $\U=C([0,T],\R)$ were represented as vectors in $\R^{1000}$ containing their values on a uniform grid of step-size $0.1$. We run these algorithms with parameter choices $N= 10^5,$ $\beta = 0.1,$ $\zeta=0.5.$

The experiments show a successful reconstruction of the SDE path.  Moreover, the grids concentrate in $[0,4.8]$ in agreement with our intuition and the uncertainty quantification is satisfactory. In contrast, we see that when using the same number of grid points but on a uniform grid the Euler-Maruyama scheme is unstable, leading to a collapse of the MCMC algorithm. Then, the posterior constructed with a uniform grid completely fails at reconstructing the SDE path, and the uncertainty quantification is overoptimistic due to poor mixing of the chain.


			\begin{figure}
		\centering
		\begin{subfigure}{.50\textwidth}
			\centering
			\includegraphics[width=1\linewidth]{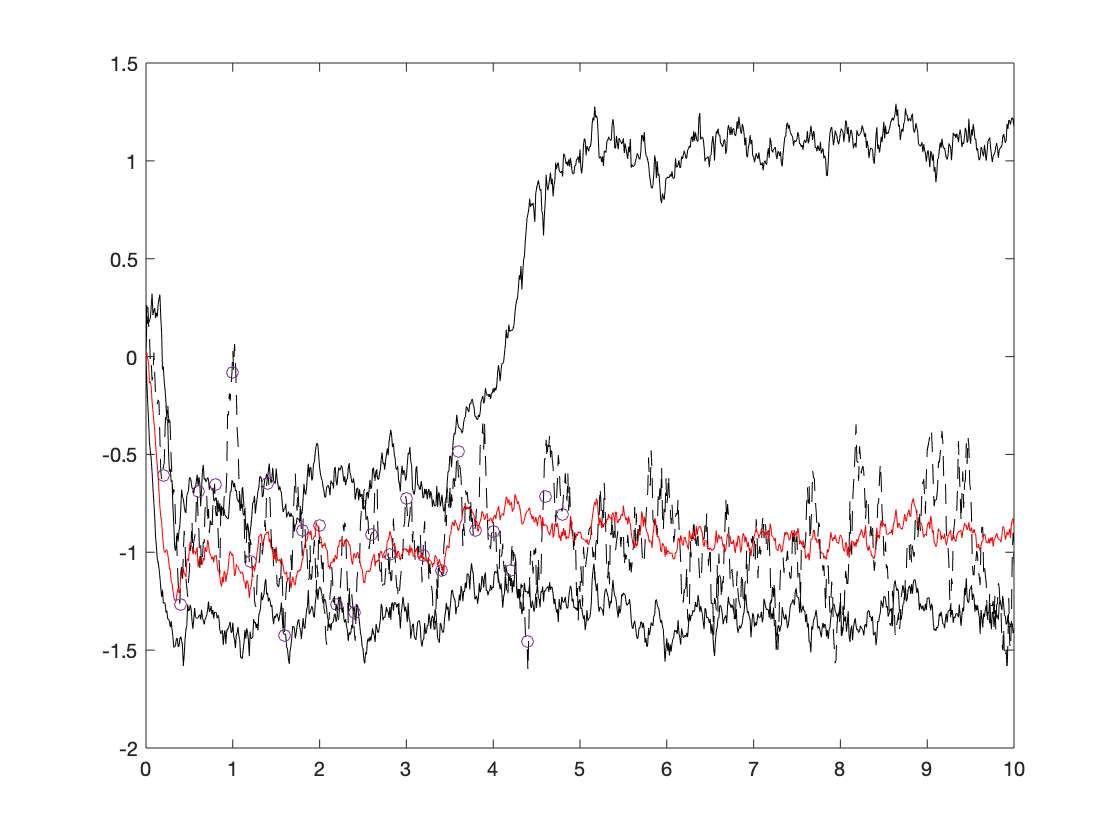}
			\caption{Data-driven discretization}.
			\label{fig:cont_rgest}
		\end{subfigure}
		\hfill
		\begin{subfigure}{.47\textwidth}
			\centering
			\includegraphics[width=1\linewidth]{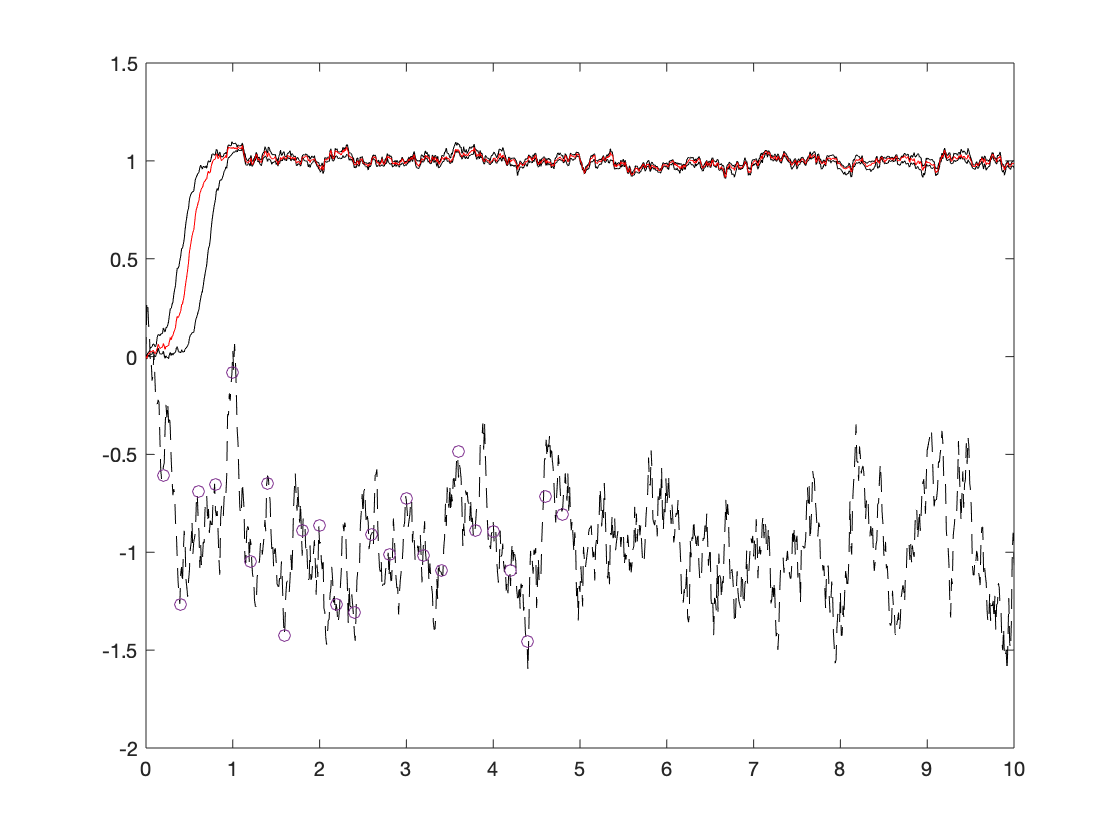}
			\caption{Uniform discretization.}
			\label{fig:cont_rgestf}
		\end{subfigure}
		\caption{Recovered SDE trajectory in the time-interval $t \in [0,10]$. The true trajectory is shown in dashed black line. The posterior median is shown in red, and 5 and 95-percentiles are shown in black. The small circles denote the locations of the observations.} 
		\label{fig:r}
	\end{figure}

	\subsection{Finite Element Discretization: Source Detection}\label{ssec:FEM}
	Consider the boundary value problem
	\begin{equation}
	\label{eq:sourcedet}
	\left\lbrace\,
	\begin{array}{@{}r@{}l@{\quad}l@{}}
		-\Delta z(x)&=\delta(x-u), &x\in D,\\
		\z(x)&=0, &x\in\partial D,
	\end{array}
	\right.
	\end{equation}	
	where $D=(0,1)\times(0,1)\subset\R^2$ is the unit square and $\delta$ is the Dirac function at the origin. We aim to recover the source location $u$ from sparse observations
	\begin{equation}
	y_i=z(s_i)+\eta_i, \quad \quad i=1,\dots, m,
	\end{equation} 
	where $\eta=[\eta_1,\dots,\eta_{m}]$ follows a centered Gaussian distribution with covariance $\Gamma$ and $s_1,\dots,s_{m}\in D\setminus \{u\}$ are observation locations. To cast the problem in the setting of Section \ref{sec:fullproblem}, we let  $\F$ be given by Green's function for the Laplacian on the unit square (which does not admit an analytical formula but can be computed e.g. via series expansions \cite{melnikov2006computability}) and $\O$ be defined by point-wise evaluation at the observation locations. 
%
The prior on $u$ is the uniform distribution in the unit square $D$, which we denote $p_u(u)$. Since $\U=D$ is finite dimensional, the posterior has Lebesgue density given by Equation \eqref{eq:posteriorlebesgue}.
We find this problem to be a good test, as there is a clear understanding that the data-driven mesh should concentrate around the source.
	\subsubsection{Forward Discretization}
	To solve equation \eqref{eq:sourcedet} numerically we employ the finite element method. The use of uniform grid is here wasteful, as the mesh should ideally concentrate around the unknown source $u.$
	
	We will use grids obtained as the Delauny triangulation of central Voronoi tessellations $\{V_i\}_{i=1}^k$ and generators $\{x_i\}_{i=1}^k$, where each $x_i\in D$ and $V_i\subset D$. This can be calculated as the solution of the optimization problem, parameterized by a probability density $\rho$ on $D$:
	\begin{equation}
		\min_{\{x_i\}\subset D, \{V_i\}}\sum_{i=1}^k\int_{V_i}\rho(x)\|x-x_i\|^2\md x
	\end{equation}
	subject to the constraint that $\{V_i\}_{i=1}^k$ is a tessellation of $D$. One can refer to \cite{du2002} for more details.
	For a fixed density $\rho$ and integer $k$ we denote the optimal grid points by $\{x_{\rho,i}\}_{i=1}^k$. Then the approximated solution map is defined as
	\begin{align}
	\label{eq:approxdet}
	\F^a:D&\rightarrow H_0^1(D)\\
	u&\mapsto z^a
	\end{align}
	where $z^a$ is given by the finite element solution of equation \eqref{eq:sourcedet} with respect to (the Delauny triangulation of) the grid points $\{x_{\rho,i}\}_{i=1}^k$. Details on the creation of grid for prescribed parameters $\rho$ and $k$ will be discussed below.
	
	In the spirit of adapting the grid to favor the ones maximizing the model evidence, we constrain $\rho$ to belong to a family of parametric densities $\Pi=\{\rho(x;\theta)|\theta\in\R^P\}$ where $\rho(x;\theta)=\mbox{Beta}(\alpha_1,\beta_1)\times\mbox{Beta}(\alpha_2,\beta_2)$ is the product measure of two Beta distributions. Therefore in this case $\theta=(\alpha_1,\beta_1,\alpha_2,\beta_2)$ and $P=4$. Each pair $(k, \theta)$ describes a member in the discretization family $\A$, where $k$ controls the number of grid points, while $\theta$ controls how these grid points are distributed in the spatial dimension.
	
	
	\subsubsection{Implementation Details and Numerical Results}
	We solved equation \eqref{eq:sourcedet} on a fine grid $k=2000$ with the true point source $u^*=(0.85, 0.85)$. The observation locations were $\{s_1,\dots,s_{25}\}=\{0.5,0.6,0.7,0.8,0.9\}\times\{0.5,0.6,0.7,0.8,0.9\}$. Observation noise was uncorrelated with $\Gamma=\gamma^2I_{25}$, $\gamma=0.05$. 
	
	In this example the prior of $u$ is the uniform distribution on $D=(0,1)\times(0,1)$ and $u$ is initialized at $(0.2,0.2)$. The parameters $\theta$ are also set to have a uniform prior $\theta=(\alpha_1,\beta_1,\alpha_2,\beta_2)\sim\textit{Uniform}\bigl([1,10]^4\bigr)$. We initialize $(\alpha_1,\beta_1,\alpha_2,\beta_2)=(1,1,1,1)$, which corresponds to (near) uniform grid in $D$. For simplicity, in this experiment we set a point mass prior on $k$, with $k=100.$ We run the algorithm with $N= 10^4.$
	
	We compare our algorithm to the traditional method where we fix a uniform grid in $D$ and run the MCMC algorithm only on $u$. We found out that with the same number of grid points, our data-driven approach gives a posterior distribution $q_{u|y}$ that is more concentrated around the true location of the point source, as shown in Figure \ref{fig:fixedmeshdet} and \ref{fig:movingmeshdet}. Also, Figure \ref{fig:finalmesh} shows that the adaptive discretization is concentrated at the top right corner of the region, where the hidden point source $u^*$ is located.

	Next we show that data-driven discretizations of the forward map can be employed to provide improved uncertainty quantification of the PDE solution, and not only to better reconstruct the unknown input. To illustrate this, we approximate the pushforward distribution $\F_\sharp(q_{u|y})$ in three different ways, as shown in Figure \ref{fig:sourcedet}.  Let $(u^{(n)},a^{(n)})$ denote the output of the Gibbs sampling algorithm at iteration $n$. We first consider the traditional method where the grid is fixed and uniform, that is, $a^{(n)}=a$ is fixed. Then the pushforward distribution can be well-approximated by $\{\F^a(u^{(n)})\}_{n=1}^{N}$, for $N$ large enough. We then consider the same setting except that $\F^a$ is replaced by a forward map $\F$ computed in a fine grid $k=2000$ and the pushforward is approximated by $\{\F(u^{(n)})\}_{n=1}^{N}$. Finally we consider a data-driven setting stemming from our algorithm, where the pushforward distribution is approximated by $\{\F^{a^{(n)}}(u^{(n)})\}_{n=1}^{N}$. We see that our algorithm reconstructs well the solution to the PDE, with a more accurate mean and a smaller variance.

	\begin{figure}
	\centering
	\begin{subfigure}{.33\textwidth}
		\centering
		\includegraphics[width=.99\linewidth]{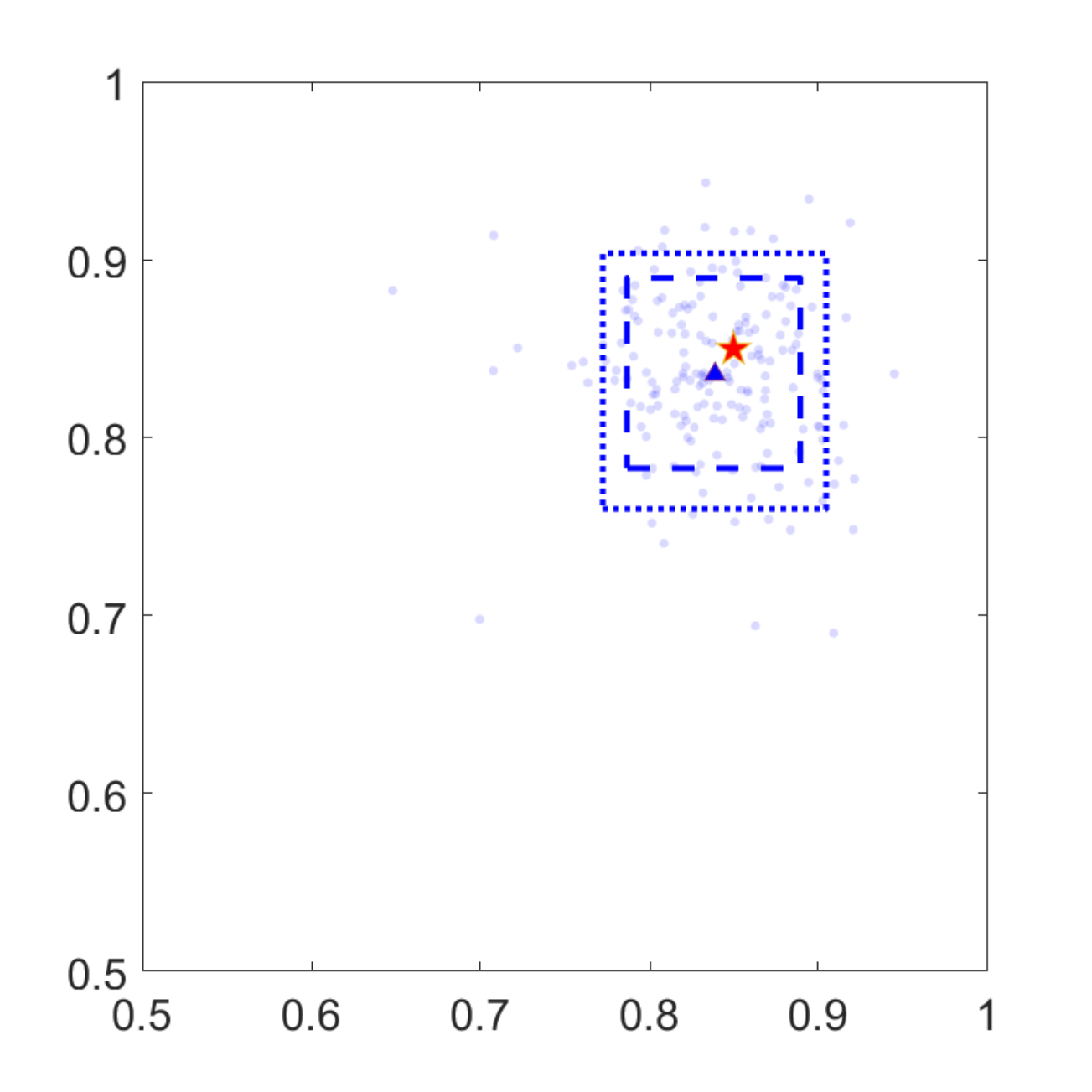}
		\caption{Fixed grid.}
		\label{fig:fixedmeshdet}
	\end{subfigure}%
	\hfill
	\begin{subfigure}{.33\textwidth}
		\centering
		\includegraphics[width=1\linewidth]{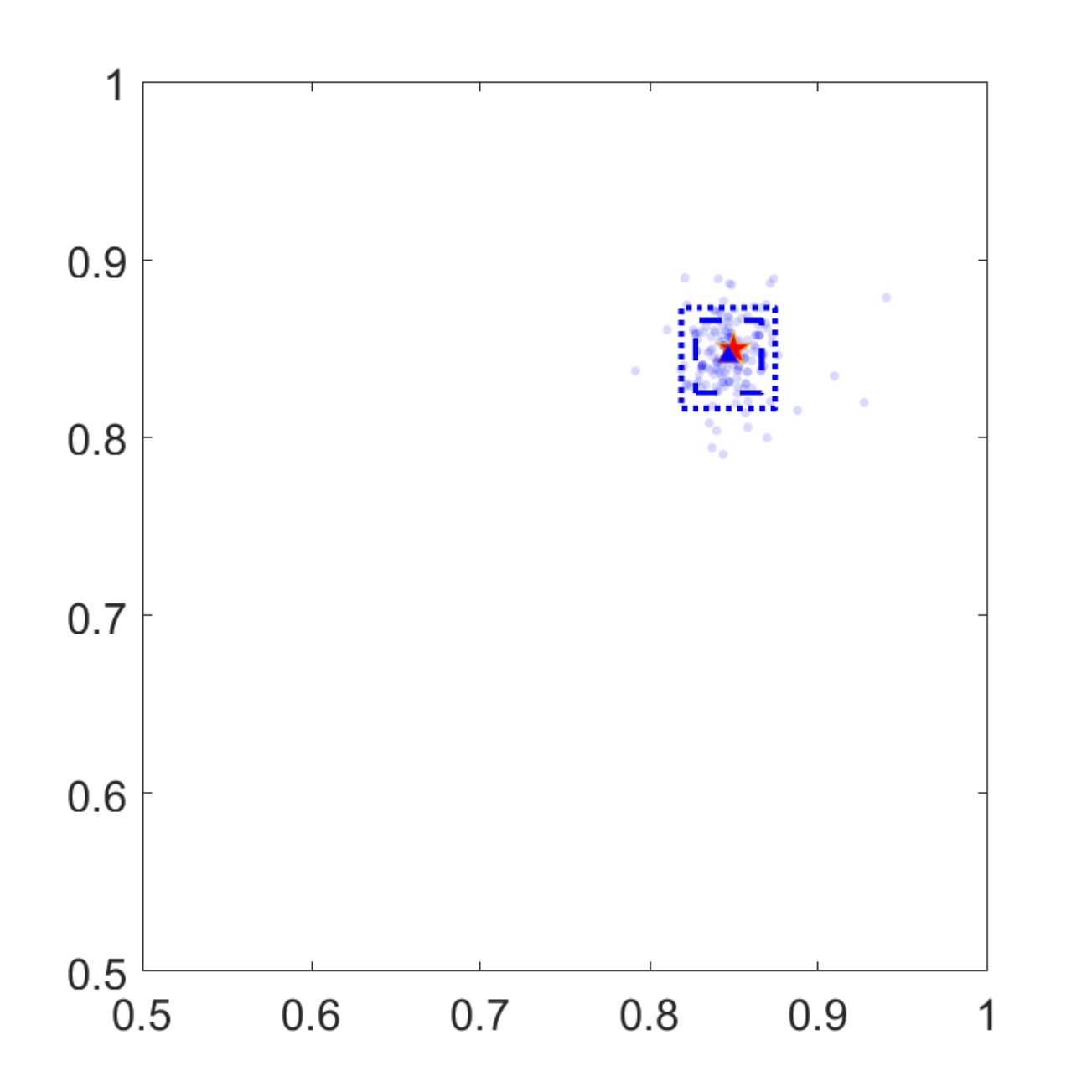}
		\caption{Data-driven grid.}
		\label{fig:movingmeshdet}
	\end{subfigure}
	\hfill
	\begin{subfigure}{.33\textwidth}
		\centering
		\includegraphics[width=1\linewidth]{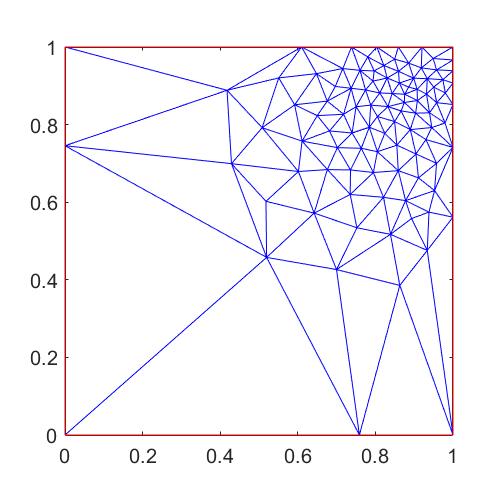}
		\caption{Grid generated in the last iteration.}
		\label{fig:finalmesh}
	\end{subfigure}
	\label{fig:resdet}
	\caption{Figures \ref{fig:fixedmeshdet} and \ref{fig:movingmeshdet} show the posterior distribution $q(u|y)$, where the grid is fixed and uniform in \ref{fig:fixedmeshdet}, and data-driven in \ref{fig:movingmeshdet}. Red star indicates the true location of the source, blue dots are random samples from the posterior, blue triangle is the posterior mean, and dash (resp. dotted) lines correspond to the 90\% (resp. 95\%) coordinate-wise  credible regions. Figure \ref{fig:finalmesh} shows the grid generated in the last iteration of the MCMC update.}
	\end{figure}
	\begin{figure}
		\centering
		\includegraphics[width=.3\textwidth]{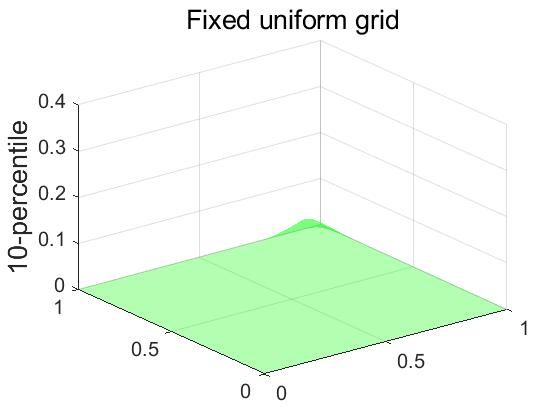}
		\includegraphics[width=.3\textwidth]{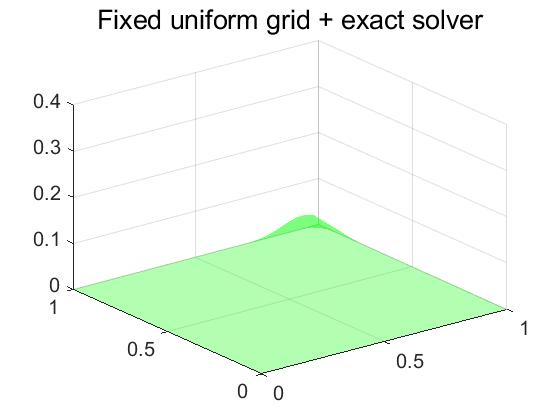}
		\includegraphics[width=.3\textwidth]{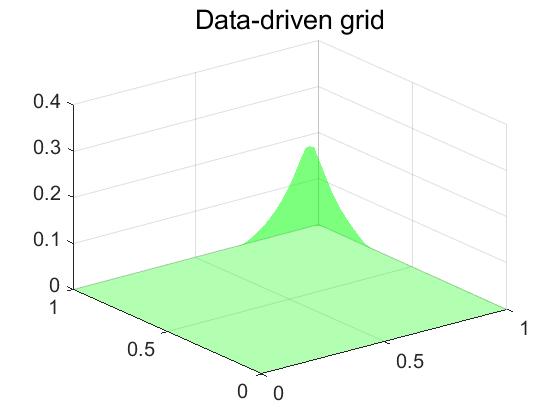}
		\\
		\includegraphics[width=.3\textwidth]{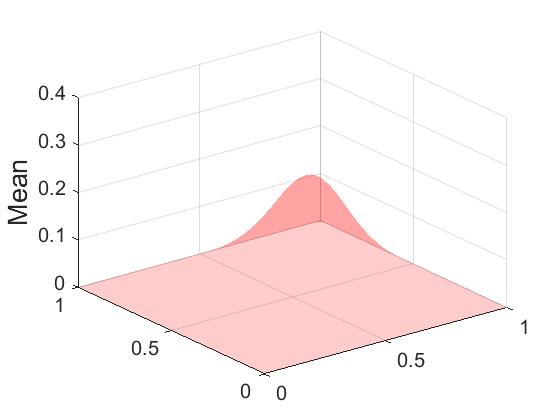}
		\includegraphics[width=.3\textwidth]{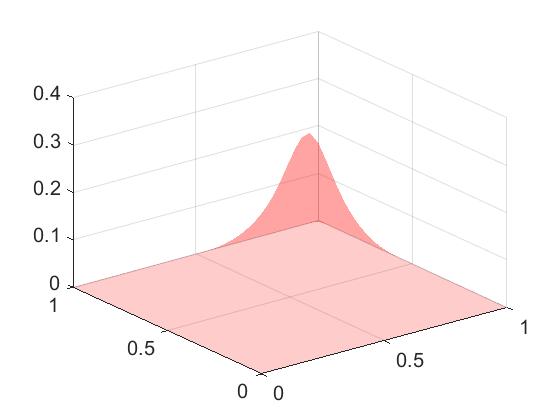}
		\includegraphics[width=.3\textwidth]{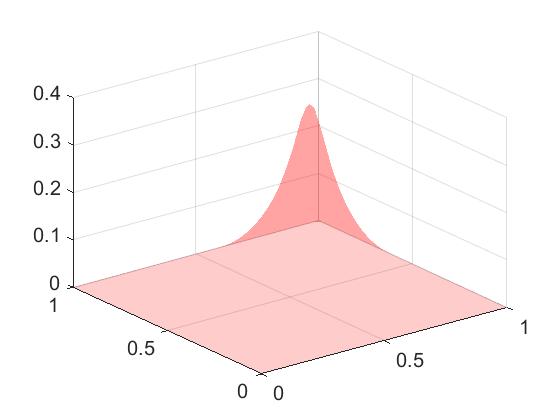}
		\\
		\includegraphics[width=.3\textwidth]{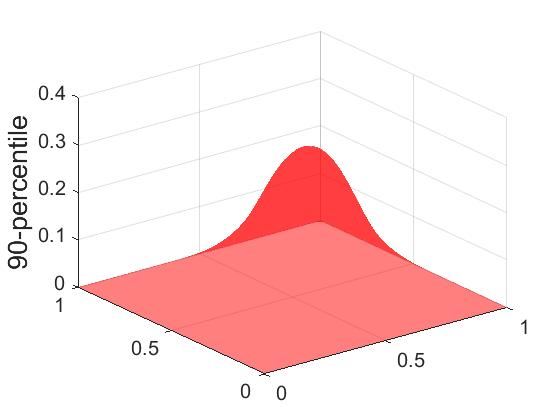}
		\includegraphics[width=.3\textwidth]{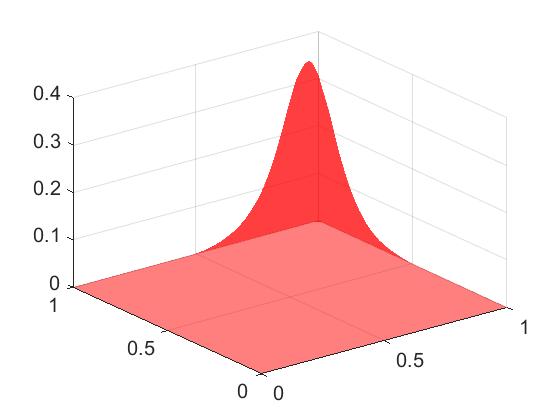}
		\includegraphics[width=.3\textwidth]{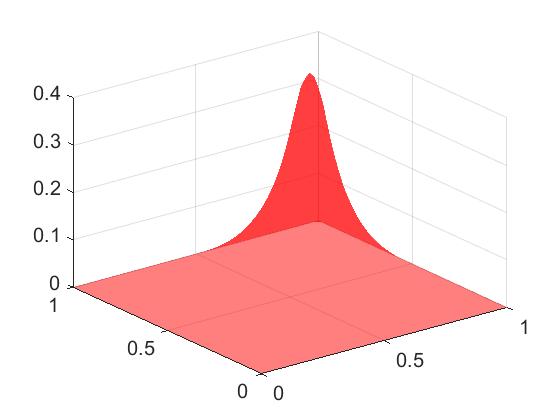}

		\label{fig:pushforwarddet}
	\caption {The mean, 10 and 90-percentile of the pushforward distribution $\F_\sharp(q_{u|y})$ under three different settings: (1) Both the posterior $q_{u|y}$ and its pushforward $\F_\sharp(q_{u|y})$ are computed on a fixed and uniform grid; (2) The posterior $q_{u|y}$ is computed on a fixed and uniform grid, and its pushforward $\F_\sharp(q_{u|y})$ is calculated using a (nearly) exact solver; (3) Both the posterior and its pushforward are computed on a data-driven grid.}
	\label{fig:sourcedet}
	\end{figure}	
	
	\section{Conclusions and Open Directions}\label{sec:conclusions}
	\begin{itemize}
	\item We have shown that, in a variety of inverse problems, the observations contain useful information to guide the discretization of the forward model, allowing a better reconstruction of the unknown than using uniform grids with the same number of degrees of freedom.  Despite these results being promising, it is important to note that updating the discretization parameters may be costly in itself, and may result in slower mixing of the MCMC methods. For this reason, we envision that the proposed approach may have more potential when the computational solution of the inverse problem is very sensitive to the discretization of the forward map and discretizing it is expensive. We also believe that density-based discretizations may help in alleviating the cost of discretization learning.
		\item  An interesting avenue of research stemming from this work is the development of prior discretization models  that are informed by numerical analysis of the forward map $\F$, while recognizing the uncertainty in the best discretization of the forward model $\G.$ Moreover, more sophisticated prior models beyond the product structure considered here should be investigated.
		\item Topics for further research include the development of new local proposals and sampling algorithms for grid-based discretizations, and the numerical implementation of the approach in computationally demanding inverse problems beyond the proof-of-concept ones considered here.
	\end{itemize}	
	
	  \section*{Acknowledgement}
  The work of NGT and DSA was supported by the NSF Grant DMS-1912818/1912802. The work of DB and YMM was supported by NSF Grant DMS-1723011.
    
    \newpage
    
	\bibliographystyle{plain}
	\bibliography{isbib}
	
\newpage

	\appendix
	\section{Algorithm Pseudo-Code}
			\begin{algorithm}
		\caption{Metropolis within-Gibbs}
		\label{algorithm:gridsampling}
		\begin{algorithmic}
			\State {\bf Input parameters}: $\beta$ (pCN  step-size), $\zeta$ (probability of location moves),  $N$ (sample size).
			\State Choose $(u^{(1)},a^{(1)})\in \U \times \A$.
			\For{$n = 1:N$} 
			\State {\bf Stage I} Do a pCN move to update $u$ given $a,y:$
			\State
			\begin{enumerate}[i)]
			\vspace{-0.5cm}
				\item  Propose $\tilde{u}^{(n)} = \sqrt{1-\beta^2} \, u^{(n)} + \beta v^{(n)}, \quad \quad v^{(n)}\sim \mu_u.$
				\item Set $u^{(n+1)} = \tilde{u}^{(n)}$ with probability 
				$$a(u^{(n)}, \tilde{u}^{(n)}) = \min\left \{1, \exp\Bigl(\Psi(u^{(n)},a^{(n)};y) - \Psi(\tilde{u}^{(n)},a^{(n)};y)\Bigr) \right\}.$$
				\item Set $u^{(n+1)} = u^{(n)}$ otherwise.
			\end{enumerate}
			\State {\bf Stage II} Update $a = (k,\theta)$ given $u$ and $y.$	
			\State {\bf Stage IIa} With probability $\zeta,$ update $\theta$ given $u,y$ with a grid re-location step:
			\begin{enumerate}[i)]
				\item Propose $\tilde{a}^{(n)}$ by picking one of the $k$ interior grid points of $a^{(n)}$ uniformly at random, and replacing it by a uniform draw in $D.$
				\item Set $a^{(n+1)} = \tilde{a}^{(n)}$ with probability 
				$$\alpha(a^{(n)}, \tilde{a}^{(n)}) = \min\left \{1, \exp\Bigl(\Psi(u^{(n+1)},{a^{(n)}};y) - \Psi(u^{(n+1)},\tilde{a}^{(n)};y)\Bigr)\right\}.$$
				\item Set $a^{(n+1)} = a^{(n)}$ otherwise.
			\end{enumerate}
			\State {\bf Stage IIb} Otherwise, (with probability $1-\zeta$) update $k$ with a birth/death step:
			\begin{enumerate}[i)]
				\item Propose a new number $\tilde{k}^{(n)}$ of grid-points. 
				\item If $\tilde{k}^{(n)} \le k^{(n)}$ remove uniformly chosen grid-points.
				\item If $\tilde{k}^{(n)} > k^{(n)}$ draw required number of new grid points uniformly at random in $D.$
				\item Set $a^{(n+1)} = \tilde{a}^{(n)}$ with probability 
				$$\alpha(a^{(n)}, \tilde{a}^{(n)}) = \min\left \{1,  \frac{\nu_k(\tilde{k}^{(n)})}{\nu_k(k^{(n)})}  \exp\Bigl(\Psi(u^{(n+1)},{a^{(n)}};y) - \Psi(u^{(n+1)},\tilde{a}^{(n)};y)\Bigr) \right\}.$$
				\item Set $a^{(n+1)} = a^{(n)}$ otherwise.
			\end{enumerate}
			\EndFor
		\end{algorithmic}
	\end{algorithm}
\newpage
\section{Additional Results for Section 5.1}
		\begin{table}[!htb]
		\begin{tabular}{|l|l|l|l|l|l|l|l|l|l|l|}
			\hline
			& (0, 1) & (1, 2) & (2, 3) & (3, 4) & (4, 5) & (5, 6) & (6, 7) & (7, 8) & (8, 9) & (9, 10) \\ \hline
			2-3   & 0.0015 & 0.0047 & 0.0116 & 0.0094 & 0.0164 & 0      & 0      & 0      & 0      & 0       \\ \hline
			4-5   & 0.0791 & 0.1089 & 0.2003 & 0.1721 & 0.1875 & 0.0260 & 0.0967 & 0.0363 & 0.0554 & 0       \\ \hline
			6-7   & 0.3345 & \textbf{0.3940} & \textbf{0.4002} & \textbf{0.4017} & \textbf{0.3956} & 0.2309 & 0.2817 & 0.1789 & 0.3529 & 0.0578  \\ \hline
			8-9   & \textbf{0.3956} & 0.3412 & 0.2785 & 0.2967 & 0.2907 & \textbf{0.3548} & \textbf{0.3011} & \textbf{0.4835} & \textbf{0.3944} & 0.3808  \\ \hline
			10-11 & 0.1548 & 0.1268 & 0.0920 & 0.0983 & 0.0924 & 0.2579 & 0.2612 & 0.2565 & 0.1635 & \textbf{0.3886} \\ \hline
			12-13 & 0.0289 & 0.0230 & 0.0153 & 0.0199 & 0.0159 & 0.0988 & 0.0550 & 0.0425 & 0.0312 & 0.1524  \\ \hline
			14-15 & 0.0052 & 0.0012 & 0.0016 & 0.0019 & 0.0016 & 0.0255 & 0.0044 & 0.0020 & 0.0027 & 0.0195  \\ \hline
			16-17 & 0.0004 & 0.0003 & 0.0004 & 0      & 0      & 0.0051 & 0      & 0.0002 & 0      & 0.0009  \\ \hline
			18-19 & 0      & 0      & 0      & 0      & 0      & 0.0009 & 0      & 0      & 0      & 0       \\ \hline
			
		\end{tabular}
		\caption{\label{tb:1} Distribution of grid points when observations are concentrated on the right, in the piecewise-constant Young' modulus case. Element on $i^{th}$ row and $j^{th}$ column represents the posterior probability of having $i$ grid points in the subinterval $j$.}
		\label{tb:rgriddistri}
	\end{table}
	
	\begin{table}[!htb]
		\begin{tabular}{|l|l|l|l|l|l|l|l|l|l|l|}
			\hline
			& (0, 1) & (1, 2) & (2, 3) & (3, 4) & (4, 5) & (5, 6) & (6, 7) & (7, 8) & (8, 9) & (9, 10) \\ \hline
			2-3   & 0      & 0      & 0      & 0      & 0      & 0.1712 & 0.1481 & 0.1435 & 0.1259 & 0.0605  \\ \hline
			4-5   & 0      & 0.0496 & 0.0076 & 0      & 0.0325 & \textbf{0.3420} & \textbf{0.3139} & \textbf{0.3183} & \textbf{0.3162} & 0.2362  \\ \hline
			6-7   & 0      & 0.2488 & 0.1263 & 0.0257 & 0.2238 & 0.2860 & 0.2995 & 0.3133 & 0.3087 & \textbf{0.3360}  \\ \hline
			8-9   & 0.0497 & \textbf{0.3785} & 0.3236 & 0.2174 & \textbf{0.4048} & 0.1341 & 0.1542 & 0.1508 & 0.1615 & 0.2316  \\ \hline
			10-11 & 0.2055 & 0.2390 & \textbf{0.3357} & \textbf{0.4114} & 0.2623 & 0.0383 & 0.0540 & 0.0483 & 0.0554 & 0.0954  \\ \hline
			12-13 & \textbf{0.3384} & 0.0717 & 0.1573 & 0.2492 & 0.0705 & 0.0062 & 0.0118 & 0.0097 & 0.0131 & 0.0274  \\ \hline
			14-15 & 0.2516 & 0.0111 & 0.0424 & 0.0789 & 0.0061 & 0.0012 & 0.0016 & 0.0009 & 0.0017 & 0.0053  \\ \hline
			16-17 & 0.1163 & 0.0013 & 0.0070 & 0.0155 & 0      & 0      & 0      & 0      & 0      & 0.0009  \\ \hline
			18-19 & 0.0321 & 0      & 0.0009 & 0.0018 & 0      & 0      & 0      & 0      & 0      & 0       \\ \hline
			20-21 & 0.0053 & 0      & 0      & 0      & 0      & 0      & 0      & 0      & 0      & 0       \\ \hline
			22-23 & 0.0007 & 0      & 0      & 0      & 0      & 0      & 0      & 0      & 0      & 0       \\ \hline
		\end{tabular}
		\caption{\label{tb:2} Distribution of grid points when observations are concentrated on the left, in the piecewise-constant Young' modulus case. Element on $i^{th}$ row and $j^{th}$ column represents the posterior probability of having $i$ grid points in the subinterval $j$.}
		\label{tb:lgriddistri}
	\end{table}
	
	\begin{figure}[!htb]
		\centering
		\begin{subfigure}{.49\textwidth}
			\centering
			\includegraphics[width=1\linewidth]{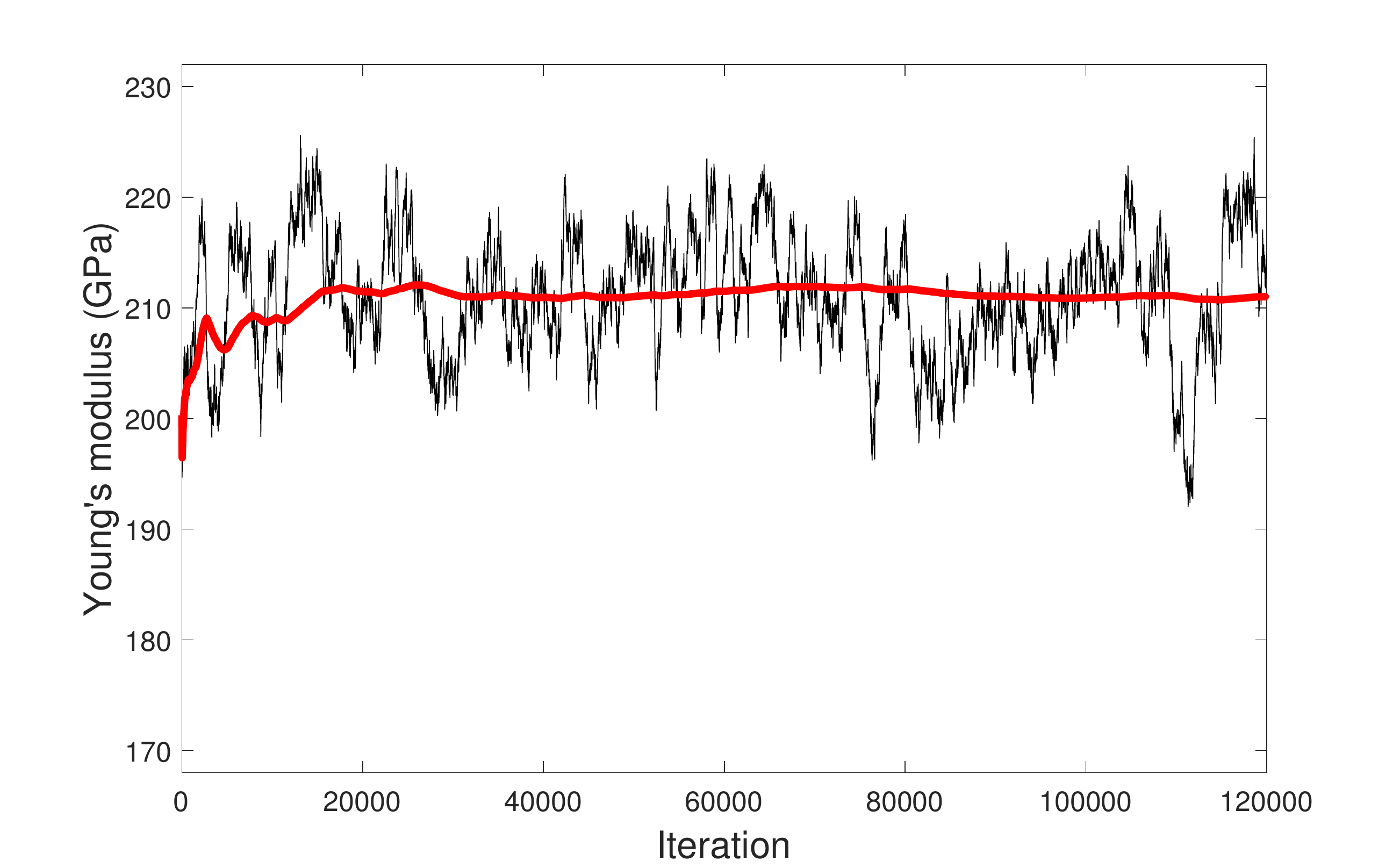}
			\caption{Continuous, right observations, $x=4$.}
			\label{fig:runningmeanr}
		\end{subfigure}
		\hfill
		\begin{subfigure}{.49\textwidth}
			\centering
			\includegraphics[width=1\linewidth]{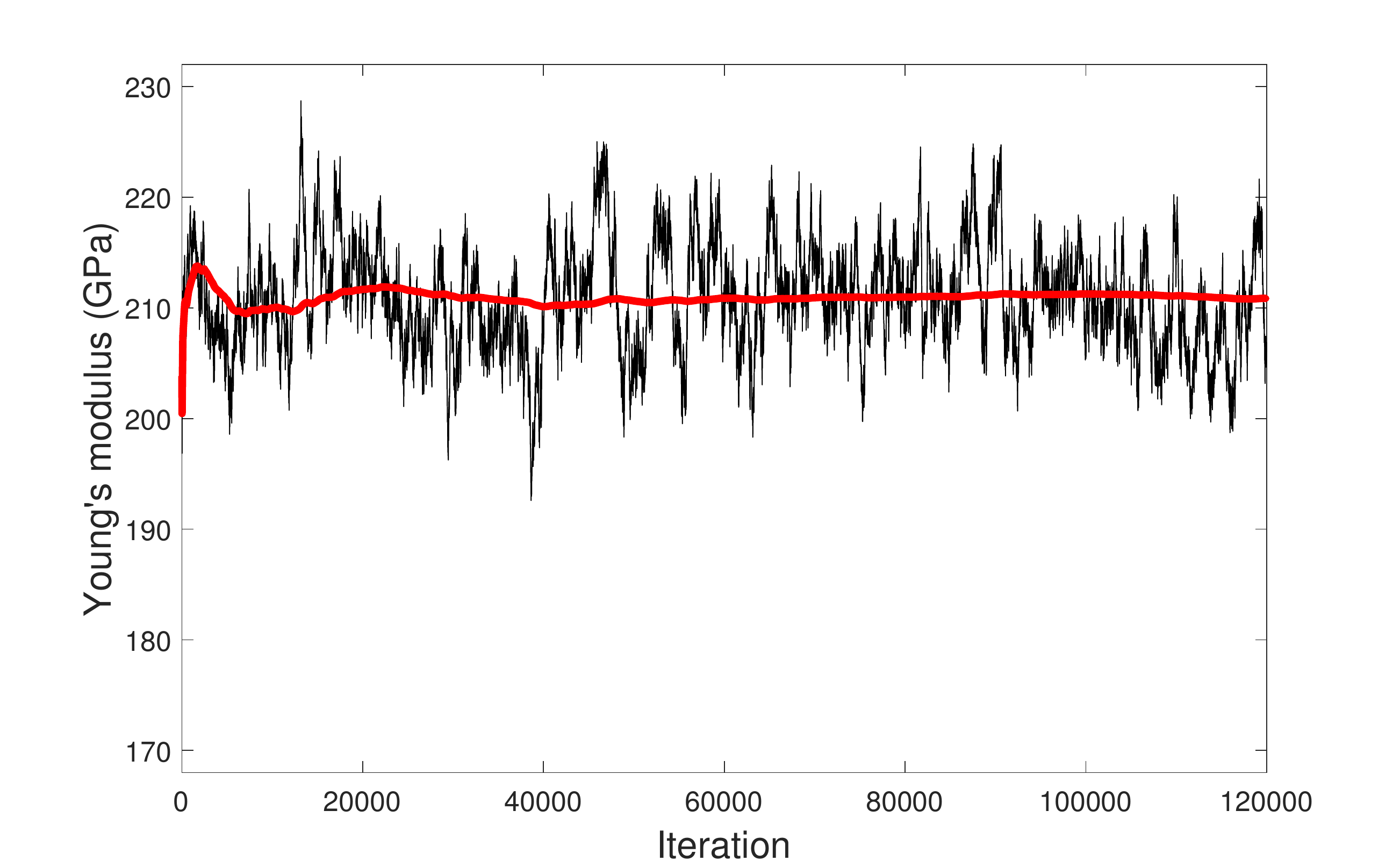}
			\caption{Continuous, left observations, $x=4$.}
			\label{fig:runningmeanl}
		\end{subfigure}%
		\vskip\baselineskip
		\begin{subfigure}{.49\textwidth}
			\centering
			\includegraphics[width=1\linewidth]{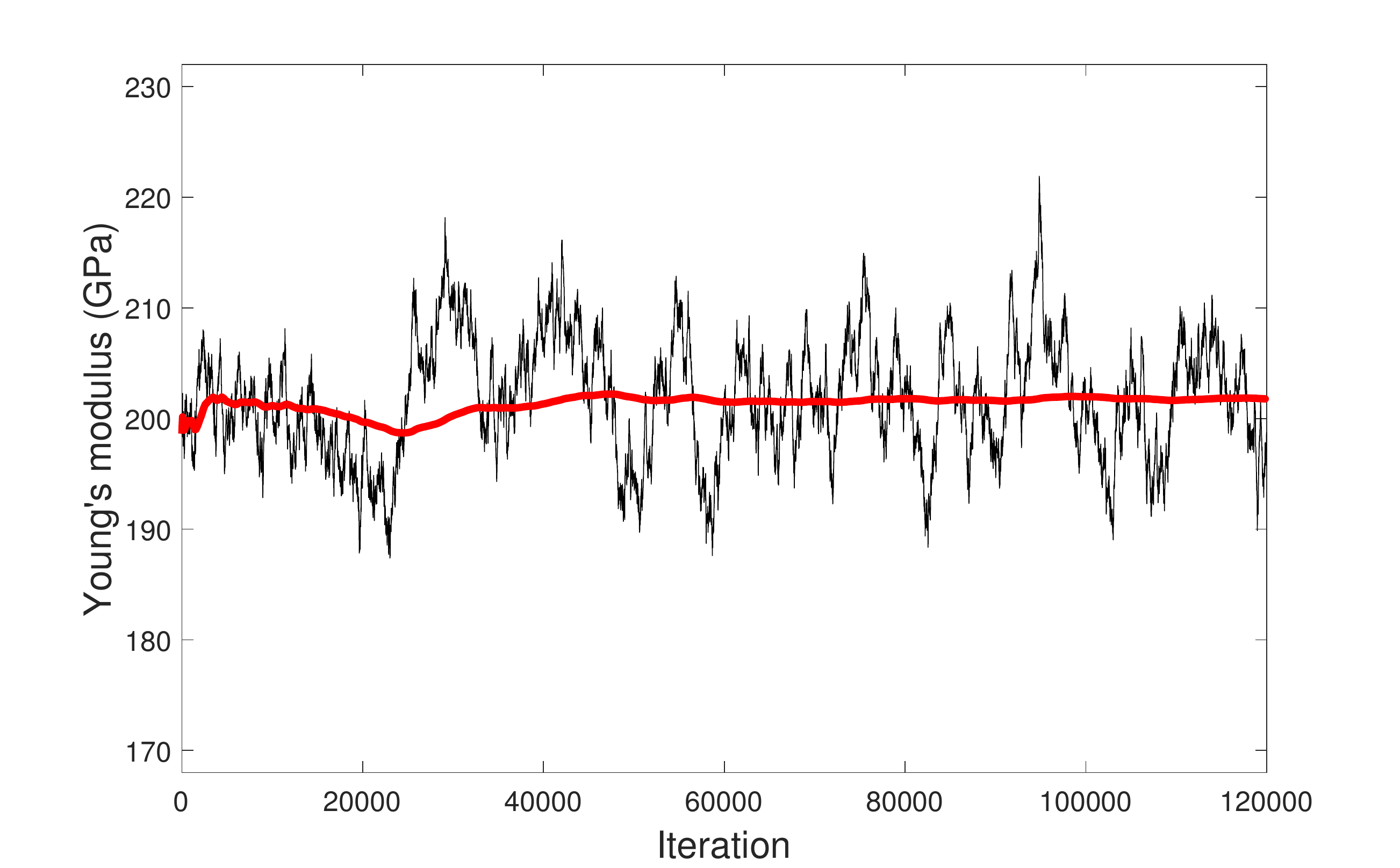}
			\caption{Continuous, right observations, $x=8$.}
			\label{fig:runningmeanr}
		\end{subfigure}
		\hfill
		\begin{subfigure}{.49\textwidth}
			\centering
			\includegraphics[width=1\linewidth]{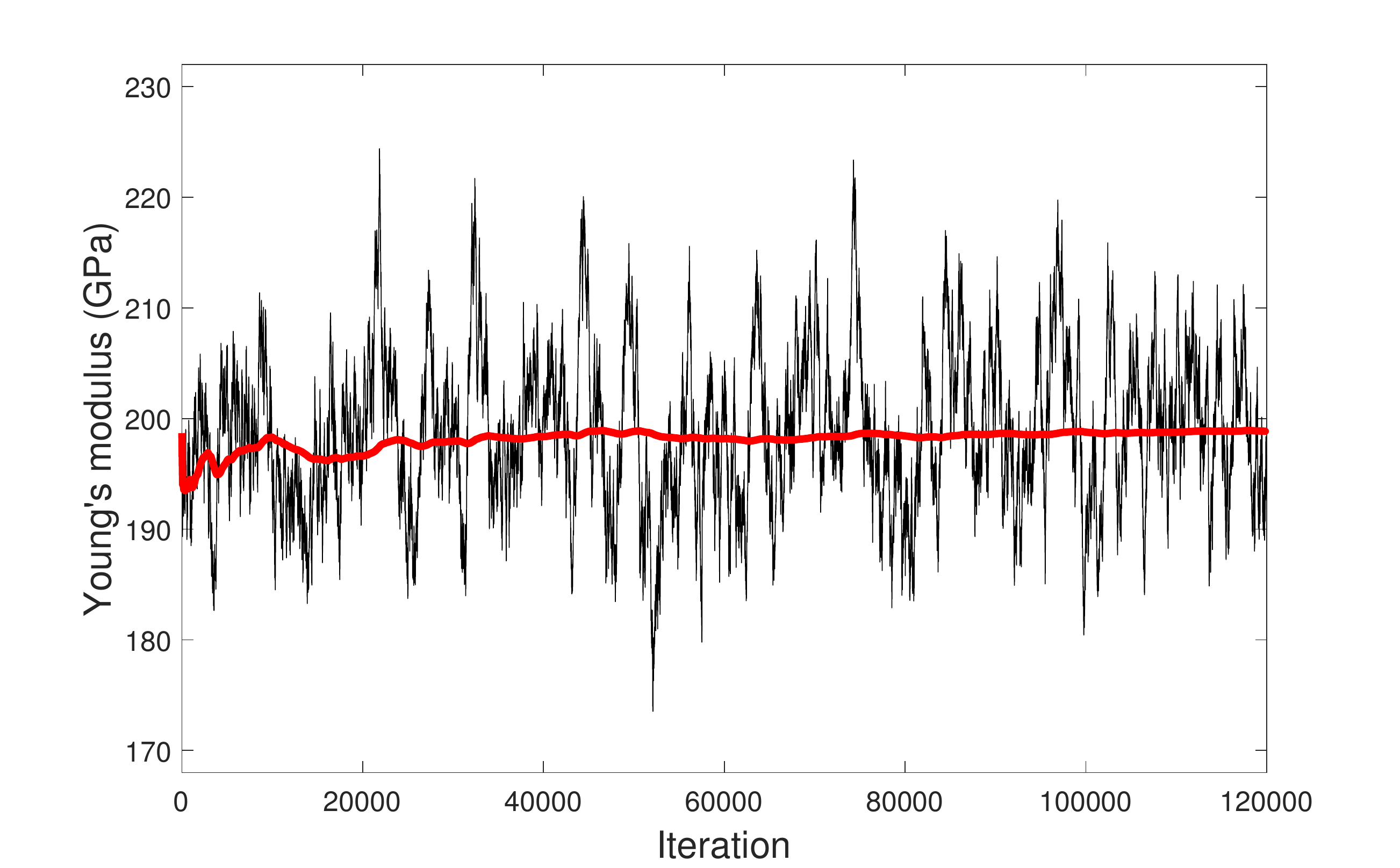}
			\caption{Continuous, left observations, $x=8$.}
			\label{fig:runningmeanl}
		\end{subfigure}%
		\caption{History of MCMC samples (black line) and running sample averages (red line) of continuous Young's modulus $u(x)$, at fixed locations $x=4$ and $x=8$ repectively, suggesting stationarity of the Markov chain.}
		\label{runningsample}
	\end{figure}
	
	\begin{table}[!htb]
		\centering
		\begin{tabular}{|l|l|l|}
			\hline
			& right obs. & left obs. \\ \hline
			$u$ & 0.2728     & 0.4590    \\ \hline
			$a$       & 0.1953    & 0.2314    \\ \hline
		\end{tabular}
	\caption{Averaged acceptance probability of $u$ and $a$ respectively, in the continuous Young' modulus case.}
	\label{accepttable}
	\end{table}

\end{document}